\let\phi\varphi
\let\hat\widehat
\newcommand{\fp}{(S^*,T^*)}
\newcommand{\reg}{\text{Reg}}
\newcommand{\unreg}{\text{Unreg}}
\newcommand{\starves}{\succ}
\newcommand{\str}{\textup {Str}}
\newcommand{\unstr}{\textup {Unstr}}
\newcommand{\incomp}{\text{Incomp}}
\newcommand{\se}{\text{SE}}
\renewcommand{\bar}{\overline}
\renewcommand{\S}{S^*}
\newtheorem{theorem}{Theorem}
\newtheorem{proposition}[theorem]{Proposition}
\newtheorem{definition}[theorem]{Definition}
\newtheorem{remark}[theorem]{Remark}
\newtheorem{corollary}[theorem]{Corollary}
\newenvironment{mydef}{\begin{definition} \rm}{\end{definition}}
\newenvironment{myremark}{\begin{remark} \rm}{\end{remark}}
\definecolor{Edgar}{rgb}{1,0,0}
\definecolor{Comment}{rgb}{0.5,0,0.5}
\begin{document}

\title{Saturation effects on T-cell activation in a model of a multi-stage pathogen}

\author{Michael Shapiro}
\address{Pathology Department \\
  Tufts University \\
  Boston, MA USA }
\email{michael.shapiro@tufts.edu}
\thanks{This work was partially supported by NIH grant K25AI079404 to MS.}

\author{Edgar Delgado-Eckert}
\address{ Children's Hospital (UKBB) \\
           University of Basel \\
           Spitalstr. 33, Postfach \\
           4031 Basel \\
           Switzerland}
\email{edgar.delgado-eckert@bsse.ethz.ch}

\keywords{Models of microepidemics \and Multi-stage pathogen \and
  host-pathogen interaction \and saturation effects \and mathematical
  models in immunology} 

\subjclass{92B05 \and  92D25}

\begin{abstract}
In \cite{cyclicPathogen}, we studied host response to a pathogen which
uses a cycle of immunologically distinct stages to establish and
maintain infection. We showed that for generic parameter values, the
system has a unique biologically meaningful stable fixed point. That
paper used a simplified model of T-cell activation, making
proliferation depend linearly on antigen-T-cell encounters. Here we
generalize the way in which T-cell proliferation depends on the sizes
of the antigenic populations.  In particular, we allow this response
to become saturated at high levels of antigen. As a result, we show
that this family of generalized models shares the same steady-state
behavior properties with the simpler model contemplated in
\cite{cyclicPathogen}.
\end{abstract}

\maketitle

\section{Introduction}

Pathogens that cyclically traverse different stages during their life cycle
or during an infection process have been studied since the late nineteenth
century. Important examples are \textit{Plasmodium} (\cite{MalariaReview}),
Trypanosoma (\cite{ChagasReview}), and the family of herpes viruses,
including the Epstein-Barr virus (EBV) (\cite{ThorleyLawson2008195}, \cite
{EBVreview}). One remarkable characteristic of infections with many of such
pathogens is life-long persistent infection (\cite{[17]}, \cite
{ThorleyLawson2008195}, \cite{[18]}, \cite{MalariaReview}, \cite
{ChagasReview}).

In \cite{cyclicPathogen}, we introduced a model of a pathogen that uses a
cycle of $n$ antigenically distinct stages to establish and maintain
infection. The model is given by $2n$ differential equations, 
\begin{align}
\frac{dS_{j}}{dt}&
=F_{j}(S,T)=r_{j-1}f_{j-1}S_{j-1}-a_{j}S_{j}-f_{j}S_{j}-p_{j}S_{j}T_{j}
\label{eq:OriginalModel} \\
\frac{dT_{j}}{dt}& =G_{j}(S,T)=c_{j}S_{j}T_{j}-bT_{j}.  \notag
\end{align}
Here $S_{j}$ denotes the pathogen population at stage $j$, $T_{j}$ is the
cognate host response. The indices $j=0,\dots ,n-1$\ are taken modulo $n$.
The parameters represent the following processes:

\begin{itemize}
\item $a_{j}$ is the decay rate of stage $S_{j}$. If $a_{j}$ is negative,
this state proliferates.

\item $f_{j}$ is the rate at which stage $S_{j}$ is lost to become (or
produce) stage $S_{j+1}$.

\item $r_{j}$ is an amplification factor in the process by which stage $
S_{j} $ becomes (or produces) stage $S_{j+1}$. For example, the loss of one
lytically infected cell may produce $r_{j}\cong 10^{4}$ free virus.

\item $p_{j}$ represents the efficacy of the immune response $T_{j}$ in
killing infected stage $S_{j}$.

\item $c_{j}$ is the antigenicity of stage $S_{j}$, i.e., its efficacy in
inducing proliferation of immune response $T_{j}$.

\item $b$ is the natural death rate of the response $T_{j}$. We assume it is
the same for all stages.
\end{itemize}

We refer to the parameters collectively as $\theta $. Except for $a_{j}$, $
j=0,\dots ,n-1$, these are assumed non-negative.

Our flagship result is that while (\ref{eq:OriginalModel}) has $2^{n}$ fixed
points for generic values of $\theta $, exactly one of these is biologically
meaningful and stable (\cite{cyclicPathogen}).

Let us focus for the moment on the terms $-p_{j}S_{j}T_{j}$ of $F_{j}$ and $
c_{j}S_{j}T_{j}$ of $G_{j}$. The term $-p_{j}S_{j}T_{j}$ represents the
killing of pathogen at stage $j$ (usually infected cells in a particular
differentiation state) by the cognate T-cell population. This takes place
pursuant to an encounter between T-cells and infected cells displaying antigen
complexed to MHC. To a first approximation the rate of such encounters is
proportional to the product of the sizes of the two populations. Thus, to a
first approximation, this term reflects the mechanism of the biological
process it represents.

The term $c_{j}S_{j}T_{j}$ of $G_{j}$ represents proliferation of
T-cells in response to the presence of antigen. Here, the underlying
biological processes are considerably more complicated, involving a
number of cell types. Initially, T-cells are activated and begin to
proliferate only in response to antigen presenting cells, particularly
dendritic cells (DCs) (\cite{janeway}).  The density of presented
antigen is known to affect these T-cell-DC interactions
(\cite{Henrickson}, \cite{Zheng}), eliciting differing CTL responses at
different densities, including T-cell exhaustion at high
concentrations of presented antigen (\cite{Mueller}).  Activated CD8+
T-cells also exhibit central and effector memory phenotypes and the
relationships between these phenotypes is not well understood
(\cite{generationOfEffector}).  Finally, the length of the cell cycle
places a hard limit on the rate at which the T-cell population can
proliferate.  Thus, the rate of T-cell proliferation becomes
\emph{saturated} for large amounts of antigen (\cite{Hudriser}). In
this, they bear a similarity to the rates of enzyme catalyzed chemical
reactions (reviewed in \cite {ReviewOnModelingBiochemicalReactions},
see \cite{EnzymeCatalyzedReactions} for experimental evidence).

To accommodate dose-dependent effects, we will study the system 
\begin{align}
\frac{dS_{j}}{dt}& =\hat{F}
_{j}(S,T)=r_{j-1}f_{j-1}S_{j-1}-a_{j}S_{j}-f_{j}S_{j}-p_{j}S_{j}T_{j}
\label{eq:ModelWithSaturation} \\
\frac{dT_{j}}{dt}& =\hat{G}_{j}(S,T)=\phi _{j}(S_{j})T_{j}-bT_{j}.  \notag
\end{align}
which generalizes (\ref{eq:OriginalModel}).  The function $\hat H$ is
unchanged from $H$.  The terms of $\hat{G}_{j}$ represent proliferation of
CTLs in response to the presence of antigen and the loss of CTLs due to death
or decommissioning. We use functions $\phi _{j}:[0,\infty)\to[0,\infty)$ 
 to denote the dose-response curves. We assume that for each $j$, $\phi
_{j}(0)=0$, and that for $x\in (0,\infty )$, $\phi _{j}^{\prime }(x)$ exists
and is positive. In particular, each $\phi _{j}$ is continuous on $[0,\infty
)$ and strictly monotone increasing. The possibility of dose-response
saturation arises from the case where there is an $m_{j}\in \mathbb{R}$ so
that $\lim_{x\rightarrow \infty }\phi _{j}(x)=m_{j}$. We show that with 
appropriate modification, the major results of \cite{cyclicPathogen} hold
for (\ref{eq:ModelWithSaturation}). While the term $\phi _{j}(S_{j})$ is
still phenomenological in that it omits discussion of biological mechanism,
we argue in Section~\ref{sec:discussion} that it may well offer a way to
address this limitation.

The system (\ref{eq:ModelWithSaturation}) can exhibit a behavior which does
not arise with (\ref{eq:OriginalModel}). In either case if $a_{j}+f_{j}<0$,
we say that $j$ is \emph{self-establishing}. It is not hard to see that a
self-establishing stage which is not regulated will expand without bound. On
the other hand, as we will show, if $m_{j}\leq b$, the host cannot mount a
response to $S_{j}$. It is \emph{\ immuno-incompetent} with respect to this
stage. If $j$ is self-establishing and the host is immuno-incompetent with
respect to $j$, we say that the parameter set is \emph{fatal}. If the host
is immunologically incompetent at all stages we say the host is \emph{
totally immunologically incompetent}. Clearly, in this case, if the basic
reproductive number of the pathogen is greater than one, infection is also
fatal to the host. Accordingly, we will assume that the host is
immuno-competent for at least one stage.

\section{Background and definitions}

\label{sec:background}

We start by transforming (\ref{eq:ModelWithSaturation}) through a change of
coordinates. For this purpose we take $m_{j}:=\lim_{x\rightarrow \infty
}\phi _{j}(x)\in 
%TCIMACRO{\U{211d} }%
%BeginExpansion
\mathbb{R}
%EndExpansion
\cup \{\infty \}$. Notice that if $m_{j}>b$, there is a unique value $
b_{j}\in 
%TCIMACRO{\U{211d} }%
%BeginExpansion
\mathbb{R}
%EndExpansion
$ so that $\phi _{j}(b_{j})=b$. If $m_{j}\leq b$, we take $b_{j}:=\infty $.
So we define 
\begin{equation*}
c_{j}:=
\begin{cases}
\phi _{j}^{\prime }(b_{j}) & \text{if $b_{j}<\infty $} \\ 
1 & \text{otherwise}
\end{cases}
\end{equation*}
In the former case, $c_{j}$ is the marginal antigenicity of $S_{j}$ at the
value $b_{j}$. We now use the linear change of coordinates 
\begin{align*}
H& :
%TCIMACRO{\U{211d} }%
%BeginExpansion
\mathbb{R}
%EndExpansion
^{2n}\rightarrow 
%TCIMACRO{\U{211d} }%
%BeginExpansion
\mathbb{R}
%EndExpansion
^{2n} \\
(S_{j},T_{j})& \mapsto (\bar{S}_{j},\bar{T}
_{j}):=H_{j}(S_{j},T_{j}):=(c_{j}S_{j},p_{j}T_{j}).
\end{align*}
This gives the equations 
\begin{align}
\frac{d\bar{S}_{j}}{dt}& =\bar{\hat{F}_{j}}(\bar{S},\bar{T})=\bar{r}
_{j-1}f_{j-1}\bar{S}_{j-1}-a_{j}\bar{S}_{j}-f_{j}\bar{S}_{j}-\bar{S}_{j}\bar{
T}_{j}  \label{eq:workingSaturationModel} \\
\frac{d\bar{T}_{j}}{dt}& =\bar{\hat{G}_{j}}(\bar{S},\bar{T})=\bar{\phi}_{j}(
\bar{S}_{j})\bar{T}_{j}-b\bar{T}_{j}  \notag \\
\bar{r}_{j}& =\frac{c_{j+1}}{c_{j}}r_{j}  \notag \\
\bar{\phi}_{j}(\bar{S}_{j})& =\phi _{j}\left( \frac{S_{j}}{c_{j}}\right)  
\notag
\end{align}
Note that for each $j$, $\bar{\phi}_{j}$ still enjoys the properties that it
is differentiable, $\bar{\phi}_{j}^{\prime }(x)>0$ for $x>0$ and $\bar{\phi}
_{j}(0)=0$. We now take $\bar{b}_{j}$ to be the unique solution to $\bar{\phi
}_{j}(\bar{b}_{j})=b$, i.e., $\bar{b}_{j}:=$ $c_{j}b_{j}$, where such
exists. Note that $\bar{\phi}_{j}$ now enjoys the additional property that $
\bar{\phi}_{j}^{\prime }(\bar{b}_{j})=1$. In the case studied in \cite
{cyclicPathogen}, $\phi _{j}(S_{j})=c_{j}S_{j}$ and thus $\bar{\phi}
_{j}(S_{j})=S_{j}$, giving 
\begin{align}
\frac{d\bar{S}_{j}}{dt}& =\bar{\hat{F}_{j}}(\bar{S},\bar{T})=\bar{r}
_{j-1}f_{j-1}\bar{S}_{j-1}-a_{j}\bar{S}_{j}-f_{j}\bar{S}_{j}-\bar{S}_{j}\bar{
T}_{j}  \label{eq:workingOriginalModel} \\
\frac{d\bar{T}_{j}}{dt}& =\bar{\hat{G}_{j}}(\bar{S},\bar{T})=\bar{S}_{j}\bar{
T}_{j}-b\bar{T}_{j}  \notag
\end{align}
We will henceforth drop the bars and assume our equations are given in the
form (\ref{eq:workingSaturationModel}). We take a parameter set $\theta $ to
be a set of values for $b$, $r_{j}$, $f_{j}$, $a_{j}$, $m_{j}$ and $b_{j},$ $
j=0,...,n-1$. When we need to make an explicit comparison with the $\phi _{j}
$ of (\ref{eq:ModelWithSaturation}), we will refer to the later as
\textquotedblleft biological $\phi $\textquotedblright , $\phi _{j}^{\text{
bio}}$.

We will adopt the following notational conventions. Sets such $[j,k]$ and $
[j,k)$ are to be taken cyclically. That is to say, if $j<k$, then $
[j,k]=\{j,\dots ,k\}$, while if $j>k$, $[j,k]=\{j,\dots ,n-1,0,\dots ,k\}$.
We take $[j,j)$ to be the empty set so that any product taken over $[j,j)$
is equal to one. We abuse notation by taking $[0,n)=\{0,\dots ,n-1\}$ 

We now review and in some cases generalize the definitions of \cite
{cyclicPathogen}.

\begin{mydef}
Given a fixed point $(S^*,T^*)$ of (\ref{eq:workingSaturationModel}), the 
\emph{regulated} and \emph{\ unregulated} stages of $(S^*,T^*)$ are 
\begin{align*}
\reg{(S^*,T^*)} &= \{j \mid T^*_j \ne 0\} \\
\unreg{(S^*,T^*)} &= \{j \mid T^*_j = 0\}
\end{align*}
\end{mydef}

\begin{mydef}
$(S^{\ast },T^{\ast })$ is \emph{biologically meaningful} if $S_{j}^{\ast
}\geq 0$, $T_{j}^{\ast }\geq 0$ for $j=0,\dots ,n-1$. It is \emph{infected}
if for some (hence, all, see 1) below) $j$, $S_{j}^{\ast }>0$.
\end{mydef}

\begin{mydef}
Given a parameter set $\theta$, the \emph{self-establishing} stages are 
\begin{equation*}
\text{SE}(\theta) = \{j \mid a_j + f_j < 0 \}.
\end{equation*}
\end{mydef}

\begin{mydef}
The \emph{immuno-incompetent} stages of $\theta $ are 
\begin{equation*}
\text{Incomp}(\theta )=\{j\mid m_{j}\leq b\}=\{j\mid b_{j}=\infty \}.
\end{equation*}
\end{mydef}

\begin{mydef}
If $j\in\text{SE}(\theta)\cap\text{Incomp}(\theta)$, we say that the stage $j
$ and the parameter set $\theta$ are \emph{fatal}. We will assume that the
host is capable of mounting a response to at least one stage, i.e., $\text{
Incomp}(\theta) \ne [0,n)$.
\end{mydef}

\begin{mydef}
\label{def:followOn} If $\text{SE}(\theta )=\emptyset $, the \emph{follow-on}
constants of $\theta $ are 
\begin{align*}
M_{j}& =\frac{r_{j}f_{j}}{a_{j+1}+f_{j+1}} \\
M_{jk}& =\prod_{\ell \in \lbrack j,k)}M_{\ell }
\end{align*}
In the case where $\text{SE}(\theta )\neq \emptyset $, $M_{j}$ is only \emph{
meaningful} for our purposes for $j+1\notin \text{SE}(\theta )$.
Accordingly, $M_{jk}$ is only \emph{meaningful} if $(j,k]\cap \text{SE}
(\theta )=\emptyset $. Note that for every $k\in \lbrack j,\ell )$ it holds $
M_{j\ell }=M_{jk}M_{k\ell }$.
\end{mydef}

\begin{mydef}
\label{def:starves} We say that $j$ \emph{starves} $k$ and write $j\starves k
$ if $b_{j}M_{jk}<b_{k}$. Here we assume that $M_{jk}$ is meaningful and
that $b_{j}$ is finite, though $b_{k}$ need not be. In particular, if $M_{jk}
$ is meaningful, $j\notin \text{Incomp}(\theta )$, and $k\in \text{Incomp}
(\theta )$, then $j\starves k$.
\end{mydef}

\begin{mydef}
The \emph{starvable} stages of $\theta$ are 
\begin{equation*}
\str{(\theta)} = \{k \mid \text{there is $j$ so that $j\starves k$} \}. 
\end{equation*}
The \emph{unstarvable} stages $\unstr{(\theta)}$ are the complement of
these.
\end{mydef}

\begin{mydef}
A biologically meaningful fixed point $(S^{\ast },T^{\ast })$ is \emph{
saturated}\footnote{
There is an unfortunate collision here between the use of the term saturated
to denote the host mounting a T-response to all stages capable of supporting
one (\cite{cyclicPathogen}) and the meaning of the
term used in the Introduction above, namely, a maximum prolfieration
rate, with no increase through further stimulation.}
 if $\reg{(S^{\ast },T^{\ast })}=\unstr{(\theta )}$. It is
 \emph{moderated} if for $j\in \text{Unreg}$, $S_{j}^{\ast }<b_{j}$.
\end{mydef}

\begin{mydef}
If $\text{SE}(\theta)=\emptyset$, we define 
\begin{equation*}
R_0 = \prod_{j=0}^{n-1} M_j .
\end{equation*}
$R_0$ may be interpreted as the number of copies of the pathogen produced by
a single copy entering a naive host (\cite{cyclicPathogen}). It is not hard to
see that $R_0$ is invariant under the transformation $H$ as befits a
property of the organism being described.
\end{mydef}

\begin{mydef}
When we say that $\theta$ is \emph{generic} we will require the following:

\begin{itemize}
\item $R_0 \ne 1$.

\item There is no $j$ so that $a_j+f_j = 0$.

\item There is no pair $(j,k)$ so that $b_{j}<\infty $, $b_{k}<\infty $ and $
b_{j}M_{jk}=b_{k}$.

\item At the saturated biologically meaningful fixed point, there are $j$
and $k$ so that $T_{j}^{\ast }\neq T_{k}^{\ast }$.
\end{itemize}

It is not hard to see that each of these conditions has measure zero, thus
justifying the use of the term generic. The detailed motivation for these
exclusions can be found in \cite{cyclicPathogen}.
\end{mydef}

\begin{mydef}
Suppose $\reg{(S^*,T^*)}\ne\emptyset$. Given a stage  $k$, we define $h_k
$ to be the unique stage such that $h_k\in  \reg{(S^*,T^*)}$ and $
(h_k,k) \subset\text{Unreg}(S^*,T^*)$.
\end{mydef}

\section{Results}

\label{sec:results}

We will start by assuming that $\text{SE}(\theta )=\emptyset $. This will be
a standing assumption until it is lifted in Section~\ref{sec:SE}.

The linear stability analysis performed in \cite{cyclicPathogen} is possible
because we were able to calculate the characteristic polynomial of the
Jacobian matrix of the right hand side of (\ref{eq:workingOriginalModel}),
which corresponds to setting $\phi _{j}(S_{j})=S_{j}$ for each $j$ in (\ref
{eq:workingSaturationModel}) (recall that we are omitting the bars). Here we
contemplate more general $\phi _{j}:
%TCIMACRO{\U{211d} }%
%BeginExpansion
\mathbb{R}
%EndExpansion
\rightarrow 
%TCIMACRO{\U{211d} }%
%BeginExpansion
\mathbb{R}
%EndExpansion
,$ $j=0,...,n-1$ (which is a consequence of contemplating more general $\phi
_{j}^{\text{bio}}:
%TCIMACRO{\U{211d} }%
%BeginExpansion
\mathbb{R}
%EndExpansion
\rightarrow 
%TCIMACRO{\U{211d} }%
%BeginExpansion
\mathbb{R}
%EndExpansion
$) with the properties mentioned above. Consequently, in order to make use
of the results obtained in \cite{cyclicPathogen}, we need to establish what
changes are induced on the Jacobian matrix through the use of more general
functions $\phi _{j}$. The partial derivatives of the right hand side of (
\ref{eq:workingSaturationModel}) are given by 
\begin{align*}
\frac{\partial \hat{F}_{k}}{\partial S_{j}}& =
\begin{cases}
r_{k-1}f_{k-1} & \text{if $j=k-1$} \\ 
-a_{k}-f_{k}-T_{k} & \text{if $j=k$} \\ 
0 & \text{otherwise}
\end{cases}
\\
\frac{\partial \hat{F}_{k}}{\partial T_{j}}& =
\begin{cases}
-S_{k} & \text{if $j=k$} \\ 
0 & \text{otherwise}
\end{cases}
\\
\frac{\partial \hat{G}_{k}}{\partial S_{j}}& =
\begin{cases}
\phi _{k}^{\prime }(S_{k})T_{k} & \text{if $j=k$} \\ 
0 & \text{otherwise}
\end{cases}
\\
\frac{\partial \hat{G}_{k}}{\partial T_{j}}& =
\begin{cases}
\phi _{k}(S_{k})-b & \text{if $j=k$} \\ 
0 & \text{otherwise}
\end{cases}
\end{align*}
Since the functions $\hat{F}_{j}$ do not depend on any $\phi _{j}$, only the
partial derivatives $\frac{\partial \hat{G}_{k}}{\partial S_{j}}$ and $\frac{
\partial \hat{G}_{k}}{\partial T_{j}}$ differ from the results obtained in 
\cite{cyclicPathogen}. As we shall see, most differences vanish
when the functions are evaluated at a fixed point.

\begin{proposition}
\label{prop:R0} ~

\begin{enumerate}
\item $R_0$ is the basic reproductive number of the pathogen.

\item If $R_0<1$, the pathogen fails to establish infection and $(S^*,T^*) =
0$ is  a local attractor. If $R_0>1$, the pathogen is able to establish 
infection. In particular, this makes $(S^*,T^*) =0$ an unstable fixed point.

\item If $R_{0}<1$, $(S^{\ast },T^{\ast })=0$ is a global attractor.
\end{enumerate}
\end{proposition}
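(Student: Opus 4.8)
The plan is to reduce the entire statement to the spectral analysis of a single $n\times n$ matrix $A$ governing the pathogen subsystem near the origin, and to show that the scalar threshold $R_0=1$ is exactly the condition that flips the sign of its dominant eigenvalue. First I would record the linearization. Writing $d_j:=a_j+f_j$ (all positive, since $\se(\theta)=\emptyset$ and $\theta$ is generic) and $\rho_j:=r_jf_j$, the partial derivatives tabulated above show that at $(S^*,T^*)=0$ the Jacobian is block diagonal,
\[
J(0)=\begin{pmatrix} A & 0 \\ 0 & -bI \end{pmatrix},
\]
where $A$ is the cyclic matrix with $A_{kk}=-d_k$ and $A_{k,k-1}=\rho_{k-1}$ (indices mod $n$); the coupling blocks vanish because $\partial\hat G_k/\partial T_j$ gives $\phi_k(0)-b=-b$ while $\partial\hat F_k/\partial T_j=-S_k$ and $\partial\hat G_k/\partial S_j=\phi_k'(S_k)T_k$ are both zero at $S=T=0$. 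Expanding the determinant (only the identity and the full $n$-cycle contribute) gives
\[
\det(\lambda I-A)=\prod_{k=0}^{n-1}(\lambda+d_k)-\prod_{k=0}^{n-1}\rho_k,
\]
and since $\prod_k\rho_k=R_0\prod_k d_k$ we get $\det(-A)=\big(\prod_k d_k\big)(1-R_0)$. As the block $-bI$ contributes only the eigenvalue $-b<0$, the fate of the origin is decided entirely by $A$.

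Next I would locate the roots of $\det(\lambda I-A)$ by a direct modulus estimate, which conveniently avoids any appeal to smoothness of $\phi_j$ at $0$. If $\lambda$ is a root with $\operatorname{Re}\lambda\ge 0$, then $\prod_k(\lambda+d_k)=\prod_k\rho_k$ is real and nonnegative, so $\prod_k|\lambda+d_k|=\prod_k\rho_k$; but $|\lambda+d_k|\ge\operatorname{Re}\lambda+d_k\ge d_k$, whence $\prod_k\rho_k\ge\prod_k d_k$, i.e. $R_0\ge 1$. Thus if $R_0<1$ every eigenvalue of $A$ has negative real part, $J(0)$ is Hurwitz, and the origin is a local attractor. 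If instead $R_0>1$, the real polynomial $\det(\lambda I-A)$ is negative at $\lambda=0$ and tends to $+\infty$, so $A$ has a real eigenvalue $\lambda_0>0$. Here the key structural point is that $\{T=0\}$ is forward invariant and that on it the $S$-equations reduce exactly to the linear system $\dot S=AS$ (the term $-S_jT_j$ drops out); taking $S(0)$ along the corresponding eigenvector — which is nonnegative, since $v_k=\rho_{k-1}v_{k-1}/(\lambda_0+d_k)$ — yields a biologically meaningful solution $S(t)=e^{\lambda_0 t}S(0)$ that grows without bound, so the origin is unstable and the pathogen establishes. Statement (1) is then the interpretive claim that $R_0=\prod_j M_j=\prod_j\rho_j/\prod_j d_j$ is the basic reproductive number: it follows either from the next-generation matrix method (whose threshold $\rho(FV^{-1})=1$ coincides with $R_0=1$) or from the demographic reading that one stage-$j$ copy spawns on average $r_jf_j/(a_j+f_j)$ stage-$(j+1)$ copies over its lifetime, so a full traversal of the cycle multiplies the count by $R_0$.

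For the global statement (3) I would run a comparison argument on the forward-invariant nonnegative orthant. Since $S_j,T_j\ge 0$ there, $\dot S_j=[AS]_j-S_jT_j\le[AS]_j$, and because $A$ is a Metzler matrix the linear flow $e^{At}$ is monotone, so the differential inequality yields $0\le S(t)\le e^{At}S(0)$ componentwise. When $R_0<1$ the matrix $A$ is Hurwitz by the estimate above, hence $e^{At}S(0)\to 0$ and therefore $S(t)\to 0$. Continuity of each $\phi_j$ with $\phi_j(0)=0$ then forces $\phi_j(S_j(t))\le b/2$ for all $j$ and all large $t$, so $\dot T_j\le-\tfrac{b}{2}T_j$ and $T(t)\to 0$; thus the origin attracts the whole biologically meaningful region (and, incidentally, this reproves the local-attractor part of (2) using only continuity of $\phi_j$). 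The main obstacle is the passage from the scalar inequality $R_0<1$ to the full spectral fact that $A$ is Hurwitz — the determinant alone controls only the product of the eigenvalues, not their individual real parts — and this is precisely what the modulus estimate supplies; once it is in hand, the comparison principle and the subsequent decay of $T$ are routine.
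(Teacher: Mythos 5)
Your proposal is correct, and it is in substance the argument the paper has in mind --- the paper's own proof is essentially a pointer: it notes that since $\phi_j(0)=0$ the Jacobian at the origin coincides with the one in \cite{cyclicPathogen} and defers the spectral analysis to Propositions~1 and~2 there, and it handles claim (3) by remarking that the reproductive number in the presence of an immune response is no larger than in the naive host. What you have done is supply the content of those citations: the block-diagonal form of $J(0)$, the characteristic polynomial $\prod_k(\lambda+d_k)-\prod_k\rho_k$, the modulus estimate showing $R_0<1$ forces every eigenvalue of $A$ into the open left half-plane, and the positive eigenvector in the invariant face $\{T=0\}$ when $R_0>1$. Your treatment of (3) via the Metzler/comparison principle ($\dot S\le AS$ componentwise, monotonicity of $e^{At}$, then exponential decay of $T_j$ once $\phi_j(S_j)<b$) is the rigorous form of the paper's one-line remark, and it has a genuine advantage worth noting: it uses only continuity of $\phi_j$ at $0$, whereas the linearization route quietly requires differentiability of $\hat G_k$ near the origin, which is delicate because the paper only assumes $\phi_j'$ exists on $(0,\infty)$ (the partial $\phi_k'(S_k)T_k$ need not extend continuously to $(0,0)$ if $\phi_k'$ blows up there). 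Since your instability argument for $R_0>1$ also works directly on the invariant subspace $\{T=0\}$ rather than through Hartman--Grobman, your version is actually slightly more robust than the one obtained by citation. No gaps.
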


\begin{proof}
There are two ways to establish the first and second claims.  One is
by using the interpretation of $R_{0}$ in terms of the lifespan and
productivity of each stage.  The other is by computing the eigenvalues
the Jacobian.  We briefly sketch the first approach.  In the absence
of immune response, stage 0 has an expected lifespan of
$\frac{1}{a_0+f_0}$.  During that time, it produces
$\frac{r_0f_0}{a_0+f_0}$ copies of stage 1.  These, in turn, produce
$\frac{r_0f_0}{a_0+f_0} \frac{r_1f_1}{a_1+f_1}$ copies of stage 2.
Continuing in this way produces $R_0$ copies of stage 0.  If this is
greater than 1, the pathogen can establish infection, if less than 1,
not.

To see these two claims using the eigenvalues of the Jacobian, notice
that since $\phi _{j}(0)=0$, $ j=0,...,n-1$, the Jacobian matrix
evaluated at $(S^{\ast },T^{\ast })=(\vec{0 },\vec{0})$ is identical
to the one obtained in \cite[Proposition 1] {cyclicPathogen}.  Thus
the claim follows from Propositions~1 and 2 of \cite{cyclicPathogen}. 

The third claim comes from showing that the reproductive number in the
presence of immune response is no more than the reproductive number in
the naive host as in \cite[Proposition 1]{cyclicPathogen}. \qed
\end{proof}

The following correspond to the numbered observations in Section 3 of \cite{cyclicPathogen} and follow from the fixed point equations 
\begin{align*}
\dot{S_{j}}& =\hat{F}_{j}(S^{\ast },T^{\ast })=r_{j-1}f_{j-1}S_{j-1}^{\ast
}-S_{j}^{\ast }(a_{j}+f_{j}+T_{j}^{\ast })=0 \\
\dot{T_{j}}& =\hat{G}_{j}(S^{\ast },T^{\ast })=(\phi _{j}(S_{j}^{\ast
})-b)T_{j}^{\ast }=0
\end{align*}

\noindent\textbf{1)} Given $(S^*,T^*)$, if there is $j$ such that $S_j^* = 0$
then $(S^*,T^*) = 0$.

\noindent\textbf{2)} If $j\in\text{Reg}(S^*,T^*)$, then $S_j^* = b_j$.

\noindent\textbf{3)} If $j\in\text{Unreg}(S^*,T^*)$, then $T_j^*=0$.

\noindent\textbf{4)} If $j+1\in\text{Unreg}(S^*,T^*)$, then $S_{j+1}^*=S_j^*
M_j$.

\noindent \textbf{5)} If $[j+1,k]\subset \text{Unreg}(S^{\ast },T^{\ast })$
then $S_{k}^{\ast }=S_{j}^{\ast }M_{jk}$. This follows by induction on the
previous observation.

\noindent \textbf{6)} Assume Reg$(S^{\ast },T^{\ast })\neq \emptyset $. If $
k\in \text{Unreg}(S^{\ast },T^{\ast })$, then $S_{k}^{\ast
}=b_{h_{k}}M_{h_{k}k}$. This follows from 2) and 5).

\noindent \textbf{7)} If $\theta $ is generic and $(S^{\ast },T^{\ast })\neq
0$ then $\text{Reg}(S^{\ast },T^{\ast })\neq \emptyset $. Were Reg$(S^{\ast
},T^{\ast })=\emptyset $, then by 5) $S_{0}^{\ast }=S_{0}^{\ast }R_{0}$.
Consequently $R_{0}=1$, contradicting our first genericity requirement.

\noindent\textbf{8)} If $j\in\text{Reg}(S^*,T^*)$ then $T_j^* = \frac{
r_{j-1}f_{j-1}}{b_j} S_{j-1}^* -(a_j+f_j)$.

\noindent \textbf{9)} If $j\in \text{Reg}(S^{\ast },T^{\ast })$ then $
T_{j}^{\ast }=r_{j-1}f_{j-1}\frac{b_{h_{j}}}{b_{j}}M_{h_{j}j-1}-(a_{j}+f_{j})
$. This follows from the fact that $S_{j-1}^{\ast }=S_{h_{j}}^{\ast }M_{{
h_{j}}{j-1}}$ (which follows from 6) if $j-1\in $Unreg$(S^{\ast },T^{\ast })$
, and holds trivially, if $j-1\in $Reg$(S^{\ast },T^{\ast })$) and $
S_{h_{j}}^{\ast }=b_{h_{j}}$.

\noindent \textbf{10)} If $j\in \text{Reg}(S^{\ast },T^{\ast })$, then $
T_{j}^{\ast }>0$ if and only if $b_{h_{j}}M_{h_{j}j}>b_{j}$. This follows
from the previous observation; (recall that we have assumed SE$(\theta
)=\emptyset $). 

\begin{proposition}
\label{prop:starves} Suppose $\theta $ is generic and $(S^{\ast },T^{\ast })$
is a biologically meaningful fixed point. Suppose further that $j\starves k$
. If $j\in \text{Reg}(S^{\ast },T^{\ast })$, then $k\in \text{Unreg}(S^{\ast
},T^{\ast })$.
\end{proposition}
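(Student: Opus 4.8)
The plan is to argue by contradiction, assuming that both $j$ and $k$ lie in $\reg{(S^{\ast},T^{\ast})}$ and deriving a violation of the starvation hypothesis $b_{j}M_{jk}<b_{k}$. The key tool is observation 2), which tells us that any regulated stage $\ell$ satisfies $S_{\ell}^{\ast}=b_{\ell}$. So the heart of the matter is to track how $S^{\ast}$ propagates along the cyclic interval from $j$ to $k$ using the follow-on relations, and to compare the value forced at stage $k$ by starting from stage $j$ against the value $b_{k}$ that regulation at $k$ demands.

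First I would record that since $j\starves k$, Definition~\ref{def:starves} guarantees $M_{jk}$ is meaningful and $b_{j}$ is finite; in particular $j\notin\incomp(\theta)$, which is consistent with $j\in\reg$. Next I would suppose for contradiction that $k\in\reg{(S^{\ast},T^{\ast})}$ as well. Then by 2) we have both $S_{j}^{\ast}=b_{j}$ and $S_{k}^{\ast}=b_{k}$. The intermediate stages in the cyclic interval $(j,k)$ are either regulated or unregulated; I want to relate $S_{k}^{\ast}$ to $S_{j}^{\ast}$. The cleanest route is to let $h_{k}$ denote the last regulated stage at or before $k$ (as in the definition of $h_{k}$), so that $(h_{k},k)\subset\unreg{(S^{\ast},T^{\ast})}$. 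Since $k$ itself is now assumed regulated, we either have $h_{k}=k$ or $h_{k}$ lies strictly before $k$ with everything between unregulated. Using observation 6), $S_{k}^{\ast}=b_{h_{k}}M_{h_{k}k}$, and since $k$ is regulated, $S_{k}^{\ast}=b_{k}$; combined these give $b_{h_{k}}M_{h_{k}k}=b_{k}$.

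The remaining step is to locate $j$ relative to $h_{k}$ and push the comparison through. Because $j\in\reg$ and $j$ lies in the cyclic interval leading up to $k$, and because every stage strictly between $h_{k}$ and $k$ is unregulated, $j$ cannot lie strictly inside $(h_{k},k)$; hence $j$ lies in $[k,h_{k}]$ in the cyclic sense, i.e., $h_{k}$ is at or after $j$ on the arc from $j$ to $k$. Writing $M_{jk}=M_{j h_{k}}M_{h_{k}k}$ (using the multiplicativity noted in Definition~\ref{def:followOn}) and invoking $S_{h_{k}}^{\ast}=b_{h_{k}}$, the starvation inequality $b_{j}M_{jk}<b_{k}$ would combine with $b_{h_{k}}M_{h_{k}k}=b_{k}$ to yield $b_{j}M_{j h_{k}}<b_{h_{k}}$. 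Iterating this reasoning, or equivalently applying observation 10) at the regulated stage $k$, shows $T_{k}^{\ast}>0$ requires $b_{h_{k}}M_{h_{k}k}>b_{k}$, directly contradicting the equality $b_{h_{k}}M_{h_{k}k}=b_{k}$ forced by regulation.

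The step I expect to be the main obstacle is the careful cyclic bookkeeping: ensuring that $M_{jk}$ factors correctly through $h_{k}$, that genericity rules out the borderline equality case $b_{j}M_{jk}=b_{k}$ (handled by the third genericity condition), and that the positions of $j$, $h_{k}$, and $k$ on the cycle are consistent with the subset conditions defining $h_{k}$. Once the factorization $M_{jk}=M_{j h_{k}}M_{h_{k}k}$ is justified and observation 10) is applied at $k$, the contradiction is immediate, so the conceptual content is entirely in the index arithmetic rather than in any analytic estimate.
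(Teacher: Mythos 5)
There is a genuine gap at the pivot of your contradiction. You claim that, assuming $k\in\text{Reg}(S^{\ast},T^{\ast})$, observation 6) gives $S_{k}^{\ast}=b_{h_{k}}M_{h_{k}k}$, which combined with $S_{k}^{\ast}=b_{k}$ yields the equality $b_{h_{k}}M_{h_{k}k}=b_{k}$. But observation 6) is stated (and proved, via observation 5)) only for $k\in\text{Unreg}(S^{\ast},T^{\ast})$: the follow-on relation $S_{k}^{\ast}=S_{k-1}^{\ast}M_{k-1}$ holds only when $T_{k}^{\ast}=0$, since the fixed-point equation at stage $k$ reads $S_{k}^{\ast}=r_{k-1}f_{k-1}S_{k-1}^{\ast}/(a_{k}+f_{k}+T_{k}^{\ast})$. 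For a regulated $k$ (where $T_{k}^{\ast}>0$ by biological meaningfulness) one gets the strict inequality $S_{k}^{\ast}<S_{h_{k}}^{\ast}M_{h_{k}k}$, i.e.\ $b_{h_{k}}M_{h_{k}k}>b_{k}$ --- which is exactly what observation 10) says, and is the opposite of your equality. Consequently the ``direct contradiction'' between the (false) equality and observation 10) evaporates, and the intermediate relation $b_{j}M_{jh_{k}}<b_{h_{k}}$ also rests on that false equality. What is actually needed to contradict observation 10) at $k$ is an \emph{upper} bound $b_{h_{k}}M_{h_{k}k}<b_{k}$, and producing it is the real content of the proposition.

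The paper obtains that bound by listing the regulated stages $j=j_{0},j_{1},\dots,j_{m}=h_{k}$ in $[j,k)$ and applying observation 10) at each intermediate one: $b_{j_{\ell+1}}=S_{j_{\ell+1}}^{\ast}<S_{j_{\ell}}^{\ast}M_{j_{\ell}j_{\ell+1}}$, which telescopes to $S_{h_{k}}^{\ast}M_{h_{k}k}<b_{j}M_{jk}<b_{k}$; observation 10) at $k$ then forces $k\in\text{Unreg}(S^{\ast},T^{\ast})$. Your sketch could be repaired into a valid descending induction --- from the correct inequalities $b_{j}M_{jk}<b_{k}<b_{h_{k}}M_{h_{k}k}$ and the factorization $M_{jk}=M_{jh_{k}}M_{h_{k}k}$ (with $M_{h_{k}k}>0$) one deduces $j\starves h_{k}$ with $h_{k}\in\text{Reg}(S^{\ast},T^{\ast})$ strictly closer to $j$, terminating in the absurdity $b_{j}<b_{j}$ when $h_{k}=j$ --- but that is not the argument you wrote, and the phrase ``iterating this reasoning'' does not supply the missing chain.
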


\begin{proof}
Suppose $j\succ k$ and $j\in $Reg$(S^{\ast },T^{\ast })$. Let $
\{j=j_{0},j_{1},\dots ,j_{m}\}=[j,k)\cap $Reg$(S^{\ast },T^{\ast })$
(cyclically ordered as listed) and let $j_{m+1}=k$. If $m=0$, then, $h_{k}=j$
and, due to $j\succ k$, it holds $b_{h_{k}}M_{h_{k}k}<b_{k}$. Thus, the
claim follows by observation 10), above. Otherwise we have $j\neq j_{m}=h_{k}
$ and it suffices to show that $S_{j_{m}}^{\ast }M_{j_{m}k}<b_{k}$. For $
\ell =0,\dots ,m$, we have $j_{\ell }\in $Reg$(S^{\ast },T^{\ast })$ so we
must have $S_{j_{\ell }}^{\ast }=b_{j_{\ell }}$. For $\ell =0,\dots ,m-1$ we
must also have $S_{j_{\ell +1}}^{\ast }<S_{j_{\ell }}^{\ast }M_{j_{\ell
}j_{\ell +1}}$, because $h_{j_{\ell +1}}=j_{\ell }$. We then have 
\begin{align*}
b_{j_{1}}& =S_{j_{1}}^{\ast }<S_{j_{0}}^{\ast }M_{j_{0}j_{1}} \\
b_{j_{2}}& =S_{j_{2}}^{\ast }<S_{j_{1}}^{\ast
}M_{j_{1}j_{2}}<S_{j_{0}}^{\ast }M_{j_{0}j_{2}} \\
& \vdots  \\
b_{j_{m}}& =S_{j_{m}}^{\ast }<S_{j_{0}}^{\ast
}M_{j_{0}j_{m}}=b_{j_{0}}M_{j_{0}j_{m}}
\end{align*}
so that 
\begin{equation*}
S_{h_{k}}^{\ast }M_{h_{k}k}=S_{j_{m}}^{\ast
}M_{j_{m}k}<b_{j_{0}}M_{j_{0}k}<b_{k}.
\end{equation*}
Thus $k\in $Unreg$(S^{\ast },T^{\ast })$ as required. \qed
\end{proof}

\begin{proposition}
Let $\theta $ be a generic parameter set such that $R_{0}>1$. Then $\starves$
is a strict partial order.
\end{proposition}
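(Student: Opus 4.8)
The plan is to verify the two defining properties of a strict partial order, irreflexivity and transitivity; asymmetry then follows automatically and in fact drops out of the same computation. Throughout I will use that, since $R_0 = \prod_\ell M_\ell > 1 > 0$, every follow-on constant $M_\ell$ is strictly positive, so all the products $M_{jk}$ are positive and inequalities may be multiplied by them freely; I also use that each finite $b_j$ is strictly positive (because $\phi_j(0)=0<b$ and $\phi_j$ is increasing, so the solution of $\phi_j(b_j)=b$ is positive). Recall that under the standing assumption $\text{SE}(\theta)=\emptyset$ every $M_{jk}$ is meaningful, so there is no bookkeeping to do on that front.

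Irreflexivity is immediate: $j\succ j$ would require $b_jM_{jj}<b_j$, but $M_{jj}$ is the empty product, equal to $1$, so this reads $b_j<b_j$, which is false. For transitivity, I would assume $j\succ k$ and $k\succ\ell$, so that $b_jM_{jk}<b_k$ and $b_kM_{k\ell}<b_\ell$ with $b_j,b_k$ finite, and aim to conclude $j\succ\ell$ (a well-posed assertion, since $b_j$ is finite and $M_{j\ell}$ is meaningful). Multiplying the first inequality by $M_{k\ell}>0$ and chaining with the second gives $b_jM_{jk}M_{k\ell}<b_kM_{k\ell}<b_\ell$, so everything reduces to comparing the product $M_{jk}M_{k\ell}$ with $M_{j\ell}$.

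This comparison is the one delicate point, and it is precisely where the hypothesis $R_0>1$ is needed. Because the indices are cyclic, the arcs $[j,k)$ and $[k,\ell)$ need not concatenate to $[j,\ell)$: their combined length is congruent to the length of $[j,\ell)$ modulo $n$ and lies in the range $[0,2n)$, hence equals either $|[j,\ell)|$ or $|[j,\ell)|+n$. In the first case $k\in[j,\ell)$, the two arcs tile $[j,\ell)$, and $M_{jk}M_{k\ell}=M_{j\ell}$; in the second case the walk $j\to k\to\ell$ wraps once around the entire cycle, so the multiset of indices is the full cycle together with one extra copy of $[j,\ell)$, giving $M_{jk}M_{k\ell}=R_0\,M_{j\ell}$. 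Thus in all cases $M_{jk}M_{k\ell}=R_0^{\,\varepsilon}M_{j\ell}$ with $\varepsilon\in\{0,1\}$, and since $R_0>1$ we obtain $M_{jk}M_{k\ell}\ge M_{j\ell}$.

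Combining this with the chain above yields $b_jM_{j\ell}\le b_jM_{jk}M_{k\ell}<b_\ell$, that is $j\succ\ell$, completing transitivity. Asymmetry is the special case $\ell=j$ of the same computation: here $M_{j\ell}=M_{jj}=1$ and the winding factor forces $\varepsilon=1$, so $j\succ k$ and $k\succ j$ would give $b_jR_0=b_jM_{jk}M_{kj}<b_j$, hence $R_0<1$, contradicting the hypothesis. The main obstacle in writing this up cleanly is simply making the cyclic-arc composition law $M_{jk}M_{k\ell}=R_0^{\,\varepsilon}M_{j\ell}$ precise; once that winding bookkeeping is pinned down, the order axioms are one line each.
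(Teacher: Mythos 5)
Your proposal is correct and follows essentially the same route as the paper: irreflexivity from $M_{jj}=1$, transitivity via the two cyclic cases $M_{jk}M_{k\ell}=M_{j\ell}$ or $M_{jk}M_{k\ell}=R_0M_{j\ell}$ with $R_0>1$ absorbing the winding factor, and asymmetry from $M_{jk}M_{kj}=R_0$. Your observation that asymmetry is the $\ell=j$ instance of the same computation is a tidy packaging of what the paper does as a separate (but identical) calculation.
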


\begin{proof}
We must show that $\succ$ is anti-reflexive, asymmetric and transitive. The
first follows immediately from the fact that $M_{jj}=1$.

To see that $\succ $ is asymmetric, suppose we have $j\succ k$ and $k\succ j$
. We then have $b_{j}M_{jk}<b_{k}$ and $b_{k}M_{kj}<b_{j}$. This gives $
b_{j}>b_{j}M_{jk}M_{kj}$. But $M_{jk}M_{kj}=R_{0}$, contradicting $R_{0}>1$.

To see the third we suppose that $j\succ k$ and $k\succ \ell $. We consider
two cases, $k\in \lbrack j,\ell ]$ and $\ell \in \lbrack j,k]$. In the first
case, we have $M_{jk}M_{k\ell }=M_{j\ell }$. We then have $b_{j}M_{jk}<b_{k}$
, $b_{k}M_{k\ell }<b_{\ell }$ giving $b_{j}M_{j\ell }=b_{j}M_{jk}M_{k\ell
}<b_{k}M_{k\ell }<b_{\ell }$ as required. In the second case, we have $
M_{jk}M_{k\ell }=R_{0}M_{j\ell }$, so that $b_{j}R_{0}M_{j\ell
}=b_{j}M_{jk}M_{k\ell }<b_{k}M_{k\ell }<b_{\ell }$. Since $R_{0}>1$, this
implies $j\succ \ell $ as required. \qed
\end{proof}

\begin{myremark}
Since $\starves$ is a partial order, it is cycle-free, that is there is no
sequence of stages $j_{0}\starves j_{j}\starves\dots \starves j_{0}$.
Consequently, we can define the \emph{depth} of a stage $k$, $d(k)$ to be
the length of the longest chain $j_{0}\starves\dots \starves j_{d(k)}=k$. It
follows that $\unstr{(\theta )}$ consists of the stages of depth 0. In
particular, $\unstr{(\theta )}\neq \emptyset $. Note that if $\theta $
is such that no two stages are comparable, then $\succ $ is empty and every
stage is $\succ $-maximal, so $\unstr{(\theta )}=[0,n)$. $\str{(\theta )}$
consists of the stages of positive depth. If $\text{Incomp}(\theta )\neq
\emptyset $, $\text{Incomp}(\theta )$ consists of the stages of maximal
depth.
\end{myremark}

\begin{proposition}
\label{prop:saturatedEqualsModerated}Suppose that $\theta $ is generic.
Furthermore, let $(S^{\ast },T^{\ast })$ be a biologically meaningful
infected fixed point. Then the pathogen populations are moderated at $
(S^{\ast },T^{\ast })$ if and only if the immune response is saturated at $
(S^{\ast },T^{\ast })$.
\end{proposition}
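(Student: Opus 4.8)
The plan is to reduce the whole statement to a single criterion phrased purely in terms of the starving order $\starves$, and then verify that criterion by double inclusion. First I would collect the facts forced by the hypotheses. Since $\fp$ is infected, observation 1 gives $S^{*}_{j}>0$ for every $j$, and genericity with observation 7 gives $\reg\fp\neq\emptyset$, so $h_{k}$ is defined for all $k$. A regulated stage is never immuno-incompetent: if $j\in\incomp(\theta)$ then $\phi_{j}(S^{*}_{j})<b$ for every finite $S^{*}_{j}$, forcing $T^{*}_{j}=0$; hence $b_{j}<\infty$ whenever $j\in\reg\fp$. Finally, applying observation 10 to the (positive) $T^{*}$ at each regulated stage and multiplying the resulting inequalities $b_{i_{t}}M_{i_{t}i_{t+1}}>b_{i_{t+1}}$ over consecutive regulated stages around the cycle, the $b$'s cancel and the $M$'s concatenate to give $R_{0}>1$; thus $\starves$ is a strict partial order by the preceding proposition.

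The key reduction is the equivalence that $\fp$ is moderated if and only if $h_{j}\starves j$ for every $j\in\unreg\fp$. Indeed, for $j\in\unreg\fp$ observation 6 gives
\[
S^{*}_{j}=b_{h_{j}}M_{h_{j}j},
\]
and since $b_{h_{j}}<\infty$, the genericity clause excluding $b_{j}M_{jk}=b_{k}$ (together with $S^{*}_{j}<\infty=b_{j}$ whenever $j\in\incomp(\theta)$) rules out the boundary case $S^{*}_{j}=b_{j}$. Therefore $S^{*}_{j}<b_{j}$ is equivalent to $b_{h_{j}}M_{h_{j}j}<b_{j}$, i.e.\ to $h_{j}\starves j$. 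Moderation is now a statement about $\starves$ and $h$ alone, while saturation is the statement $\reg\fp=\unstr(\theta)$.

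For moderated $\Rightarrow$ saturated I would show $\reg\fp=\unstr(\theta)$ by two inclusions. The inclusion $\unreg\fp\subseteq\str(\theta)$ is immediate, since $h_{j}\starves j$ exhibits a starver of each $j\in\unreg\fp$. For $\reg\fp\subseteq\unstr(\theta)$ I argue by contradiction: suppose $j\in\reg\fp$ is starved, say $i\starves j$. If $i\in\reg\fp$, Proposition~\ref{prop:starves} forces $j\in\unreg\fp$, a contradiction; if $i\in\unreg\fp$, then $h_{i}\starves i\starves j$, and transitivity yields the regulated stage $h_{i}$ starving $j$, so Proposition~\ref{prop:starves} again forces $j\in\unreg\fp$. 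Either way we contradict $j\in\reg\fp$.

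The reverse direction, saturated $\Rightarrow$ moderated, is the main obstacle, because membership $j\in\unreg\fp=\str(\theta)$ only supplies \emph{some} starver of $j$, whereas I must produce the \emph{particular} starver $h_{j}$. Fix $j\in\unreg\fp$, set $i:=h_{j}\in\reg\fp=\unstr(\theta)$, and work inside the arc $(i,i')$ from $i$ to the next regulated stage $i'$, whose interior is entirely unregulated, hence starvable. If moderation failed, I would choose $x$ in this arc closest to $i$ with $S^{*}_{x}\geq b_{x}$, upgraded to $b_{i}M_{ix}>b_{x}$ by genericity, and pick $g\starves x$. If $i\in[g,x]$, the factorization $M_{gx}=M_{gi}M_{ix}$ with $b_{g}M_{gx}<b_{x}<b_{i}M_{ix}$ cancels $M_{ix}>0$ to give $b_{g}M_{gi}<b_{i}$, i.e.\ $g\starves i$, contradicting $i\in\unstr(\theta)$. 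Otherwise $g\in(i,x)$, and $M_{ix}=M_{ig}M_{gx}$ with the same inequalities yields $b_{i}M_{ig}>b_{g}$, i.e.\ $S^{*}_{g}>b_{g}$ with $g$ strictly closer to $i$ than $x$, contradicting the minimality of $x$. The delicate points are keeping track of which cyclic arc contains each index so that the concatenation identities for $M$ apply, and using genericity and the finiteness of the relevant $b$'s to keep every inequality strict.
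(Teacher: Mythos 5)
Your proof is correct, and the two directions relate to the paper's proof differently. For moderated $\Rightarrow$ saturated you are doing essentially what the paper does, in contrapositive form: the paper proves $\unstr(\theta )\subseteq \reg \fp$ and $\str(\theta )\subseteq \unreg\fp$, you prove $\unreg\fp\subseteq \str(\theta )$ and $\reg\fp\subseteq \unstr(\theta )$; your device of promoting an unregulated starver $i$ of $j$ to the regulated starver $h_{i}$ via $h_{i}\starves i\starves j$ and transitivity plays exactly the role of the paper's choice of a $\starves$-maximal starver. For saturated $\Rightarrow$ moderated you genuinely diverge. The paper fixes $k\in \unreg\fp$, takes a $\starves$-maximal $j$ with $j\starves k$ (hence $j\in \unstr(\theta )=\reg\fp$), and either finishes directly when $[j+1,k)$ is unregulated or uses $j\nsucc h_{k}$ together with $j\starves k$ to force $b_{h_{k}}M_{h_{k}k}<b_{k}$ --- a short case split with no extremal choice. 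You instead run a minimality argument inside the unregulated arc, picking the non-moderated stage $x$ closest to $i=h_{x}$ and analyzing an arbitrary starver $g$ of $x$; this works (the cyclic dichotomy $i\in \lbrack g,x]$ versus $g\in (i,x)$ is exhaustive, and the cancellations are legitimate because every $M_{\ell }>0$ at an infected fixed point), but it is longer and requires more bookkeeping about arcs. The one place where your write-up adds something the paper leaves implicit is the derivation of $R_{0}>1$ from the existence of the infected fixed point, by multiplying the inequalities of observation 10 around the cycle of regulated stages; the paper invokes maximal elements and transitivity of $\starves$ --- hence the strict-partial-order proposition, whose hypothesis is $R_{0}>1$ --- without noting why that hypothesis is available here, so your observation closes a small logical gap rather than opening one.
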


\begin{proof}
We first show that if $(S^{\ast },T^{\ast })$ is moderated, then $(S^{\ast
},T^{\ast })$ is saturated.

We claim $\unstr{(\theta )}{\subseteq }$\reg{$
(S^{\ast },T^{\ast })$}. Suppose to the contrary $j\in $$\unstr{
(\theta )\cap }$\unreg{$(S^{\ast },T^{\ast })$}. By assumption $
(S^{\ast },T^{\ast })$ is moderated, so $S_{j}^{\ast }<b_{j}$. Since $j\in $
\unreg{$(S^{\ast },T^{\ast })$}, by 7), 6) and 2) above, $
S_{j}^{\ast }=S_{h_{j}}^{\ast }M_{h_{j}j}=b_{h_{j}}M_{h_{j}j}$. This gives $
b_{h_{j}}M_{h_{j}j}<b_{j}$, i.e., $h_{j}\succ j$, contradicting the
assumption that $j\in \unstr{(\theta )}$. This proves the claim.

We claim that $\str(\theta )\subseteq $Unreg$(S^{\ast },T^{\ast })$. If $\str
(\theta )=\emptyset $, this holds trivially. Suppose $k\in \str(\theta )$.
Then there is a maximal $j$ so that $j\succ k$. Being maximal $j\in \unstr
{(\theta )}$ and thus $j\in $Reg$(S^{\ast },T^{\ast })$. It
follows by Proposition~\ref{prop:starves} that $k\in $Unreg$(S^{\ast
},T^{\ast })$ as required.

We now show that if $(S^{\ast },T^{\ast })$ is saturated, then $(S^{\ast
},T^{\ast })$ is moderated.

If $\mathop{\textup
{Unreg}}(S^{\ast },T^{\ast })=\emptyset ,$ the claim holds vacuously.
Suppose that $k\in $Unreg$(S^{\ast },T^{\ast })$. We must show that $
S_{k}^{\ast }<b_{k}$. Again, we choose $j$ to be maximal so that $j\succ k$.
Since $(S^{\ast },T^{\ast })$ is saturated, $j\in $Reg$(S^{\ast },T^{\ast })$
, thus, by 2) $S_{j}^{\ast }=b_{j}$. If $[j+1,k)\subseteq 
\mathop{\textup{Unreg}}(S^{\ast },T^{\ast })$, we are done, for then $\S 
_{k}=\S _{j}M_{jk}=b_{j}M_{jk}<b_{k}$. On the other hand if $[j+1,k)\cap 
\mathop{\textup{Reg}}(S^{\ast },T^{\ast })\neq \emptyset $, choose $m\in
\lbrack j+1,k)\cap \mathop{\textup {Reg}}(S^{\ast },T^{\ast })$ so that $
m=h_{k}$. Since $m\in \mathop{\textup {Reg}}(S^{\ast },T^{\ast })$, by the
assumed saturation $m\in \unstr{(\theta )}$. Therefore $j\nsucc m$, in other
words, $b_{j}M_{jm}\geq b_{m}$. If $b_{m}M_{mk}\geq b_{k}$, these two
inequalities would yield $b_{j}M_{jk}=b_{j}M_{jm}M_{mk}\geq b_{k}$
contradicting $j\succ k$. Consequently $b_{m}M_{mk}<b_{k}$ must hold. Now we
have $\S _{k}=\S _{m}M_{mk}=b_{m}M_{mk}<b_{k}$ as required. \qed
\end{proof}

\begin{theorem}
\label{thm:unsaturatedUnstable} Suppose $\theta $ is generic and $(S^{\ast
},T^{\ast })$ is a biologically meaningful infected fixed point which is not
saturated. Then there is $j\in \text{Unreg}(S^{\ast },T^{\ast })$ so that
for any open neighborhood $U$ of $(S^{\ast },T^{\ast })$ there is a
biologically meaningful point $x\in U$ so that $\frac{dT_{j}}{dt}\big|_{x}>0$
. In particular $(S^{\ast },T^{\ast })$ is unstable.
\end{theorem}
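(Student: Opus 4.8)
The plan is to locate a single unregulated stage $j$ whose pathogen population already sits above the activation threshold $b_j$, and then to destabilize the fixed point by nudging $T_j$ up off of zero.

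First I would convert the hypothesis into a structural fact. Since $(S^*,T^*)$ is not saturated, Proposition~\ref{prop:saturatedEqualsModerated} says it is not moderated, so there is some $j\in\unreg(S^*,T^*)$ with $S_j^*\ge b_j$. A finite population cannot exceed $b_j=\infty$, so necessarily $b_j<\infty$. Because the fixed point is infected, observation 7) gives $\reg(S^*,T^*)\ne\emptyset$, whence $h_j$ is defined; observation 6) gives $S_j^*=b_{h_j}M_{h_jj}$, and since $h_j\in\reg(S^*,T^*)$ observation 2) gives $S_{h_j}^*=b_{h_j}$, forcing $b_{h_j}<\infty$. The third genericity condition now excludes the coincidence $b_{h_j}M_{h_jj}=b_j$, so in fact $S_j^*>b_j$ strictly; since $\phi_j$ is strictly increasing with $\phi_j(b_j)=b$, this gives $\phi_j(S_j^*)>b$.

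Next I would verify the displayed inequality for this $j$. Take the perturbed point $x_\varepsilon=(S^*,T^*+\varepsilon e_{T_j})$ obtained by raising only the $j$-th immune coordinate from $T_j^*=0$ to a small $\varepsilon>0$. Each $x_\varepsilon$ is biologically meaningful, and $x_\varepsilon\to(S^*,T^*)$ as $\varepsilon\to 0$, so $x_\varepsilon\in U$ for $\varepsilon$ small. As $S_j$ is unchanged at $x_\varepsilon$, we get $\frac{dT_j}{dt}\big|_{x_\varepsilon}=(\phi_j(S_j^*)-b)\,\varepsilon>0$, which is exactly the asserted point in $U$.

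For the ``in particular'' I would argue instability directly from this sign. Fix a bounded neighborhood $U$ small enough that $S_j>b_j$ throughout $U$; by monotonicity $\phi_j(S_j)-b\ge c$ for some $c>0$ on $U$. Along any trajectory issuing from $x_\varepsilon$ and remaining in $U$, $\frac{dT_j}{dt}=(\phi_j(S_j)-b)T_j\ge cT_j$, so Gr\"onwall gives $T_j(t)\ge\varepsilon e^{ct}$; since $U$ is bounded the trajectory must leave $U$, and this happens for arbitrarily small $\varepsilon$, so $(S^*,T^*)$ is unstable. Equivalently, because $T_j^*=0$ annihilates the entry $\partial\hat G_j/\partial S_j=\phi_j'(S_j^*)T_j^*$, the $\hat G_j$-row of the Jacobian at $(S^*,T^*)$ has the single nonzero entry $\phi_j(S_j^*)-b$, and expanding $\det(J-\lambda I)$ along that row exhibits $\phi_j(S_j^*)-b>0$ as an eigenvalue. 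The step I expect to be most delicate is the strictness $S_j^*>b_j$: it is precisely where genericity is indispensable, since $S_j^*=b_j$ would make $\phi_j(S_j^*)-b=0$ and collapse both the perturbation and the eigenvalue arguments.
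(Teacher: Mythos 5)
Your proof is correct and follows essentially the same route as the paper's: non-saturation implies non-moderation via Proposition~\ref{prop:saturatedEqualsModerated}, genericity upgrades $S_j^*\ge b_j$ to the strict inequality $S_j^*>b_j$ (exactly the paper's appeal to excluding $b_{h_j}M_{h_jj}=b_j$), and the perturbation $(S^*,T^*)+\varepsilon e_{T_j}$ gives $\frac{dT_j}{dt}>0$. Your closing Gr\"onwall and Jacobian-eigenvalue arguments simply make rigorous the paper's terser final assertion that such orbits move away from the fixed point, so this is a faithful and slightly more careful rendering of the published proof.
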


\begin{proof}
Since $(S^{\ast },T^{\ast })$ is not saturated, it is not moderated.  Thus,
there is $j\in \unreg(S^{\ast },T^{\ast })$ with $S_{j}^{\ast } \ge b_{j}$.
Since $\theta$ is generic, $S_{j}^{\ast } > b_{j}$, for otherwise we would
have $b_{h_j}M_{h_j j} = b_j$.  In particular, $b_{j}<\infty$.  It follows
that $j\notin \incomp(\theta )$.  Since $S_{j}^{\ast}>b_{j}$, $\phi
_{j}(S_{j}^{\ast })>b$, so $\frac{\partial \hat{G}_{j}}{\partial
  T_{j}}|_{(S^{\ast },T^{\ast })}=\phi _{j}(S_{j})-b>0$.

Let $e_{T_{j}}$ be the unit vector in the $T_{j}$ direction. Then, for
any $\delta >0$, $\dot{T_{j}}|_{(S^{\ast },T^{\ast })+\delta
  e_{T_{j}}}>0$. Thus, in any open neighborhood $U$ of $(S^{\ast
},T^{\ast })$, there are biologically meaningful points whose orbits
move away from $(S^{\ast },T^{\ast })$. In particular, $(S^{\ast
},T^{\ast })$ is unstable as required. \qed
\end{proof}

\begin{theorem}
\label{thm:uniqueStableFixedPoint} Suppose that $\theta$ is generic and that 
$(S^*,T^*)$ is a biologically meaningful infected fixed point. In
particular, not all $T_j^*$ are equal. If $(S^*,T^*)$ is moderated then $
(S^*,T^*)$ is a local asymptotically stable equilibrium. In particular, the
eigenvalues of the Jacobian matrix $J(S^*,T^*)$ have strictly negative real
part.
\end{theorem}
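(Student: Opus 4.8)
The plan is to evaluate the Jacobian $J\fp$, use the fixed-point relations to collapse the $\phi$-dependent entries, and then read off the spectrum from the reduced coupled block whose stability was established in \cite{cyclicPathogen}. First I would substitute the fixed-point data into the four families of partial derivatives computed above. For a regulated stage $k$ we have $S_k^*=b_k$ by observation 2), so $\phi_k(S_k^*)=b$ and, by the normalization $\phi_k'(b_k)=1$ arranged in Section~\ref{sec:background}, $\frac{\partial \hat G_k}{\partial T_k}=0$ while $\frac{\partial \hat G_k}{\partial S_k}=\phi_k'(b_k)T_k^*=T_k^*$. For an unregulated stage $k$ we have $T_k^*=0$ by observation 3), so $\frac{\partial \hat G_k}{\partial S_k}=\phi_k'(S_k^*)T_k^*=0$, while $\frac{\partial \hat G_k}{\partial T_k}=\phi_k(S_k^*)-b$. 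This is where ``most differences vanish'': every entry of $J\fp$ now coincides in form with the corresponding entry of the Jacobian of the linear model (\ref{eq:workingOriginalModel}) at its own moderated fixed point, the sole residue of the saturation functions being the diagonal numbers $\phi_k(S_k^*)-b$ at unregulated stages.

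Next I would peel off these unregulated diagonal entries as eigenvalues. For each $k\in\unreg\fp$ the entire $\hat G_k$-row of $J\fp$ vanishes off the diagonal (both $\frac{\partial \hat G_k}{\partial S_j}$ and $\frac{\partial \hat G_k}{\partial T_j}$ are zero for $j\neq k$, and the diagonal $S$-derivative is zero as well), so $\phi_k(S_k^*)-b$ is an eigenvalue of $J\fp$ and its row and column may be deleted without affecting the remaining spectrum. This is the step where moderation enters decisively: moderation gives $S_k^*<b_k$, and since $\phi_k$ is strictly increasing with $\phi_k(b_k)=b$, we obtain $\phi_k(S_k^*)<b$, so each of these eigenvalues is strictly negative.

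After deleting the unregulated $T$-rows and columns, the surviving block acts on $\{S_0,\dots,S_{n-1}\}\cup\{T_k:k\in\reg\fp\}$ and has exactly the block form treated in \cite{cyclicPathogen}: a cyclic bidiagonal $S$–$S$ part with positive subdiagonal entries $r_{k-1}f_{k-1}$ and diagonal $-(a_k+f_k+T_k^*)$, coupled to the regulated $T$-coordinates through $\frac{\partial \hat F_k}{\partial T_k}=-S_k^*=-b_k$ and $\frac{\partial \hat G_k}{\partial S_k}=T_k^*$, with a vanishing $T$–$T$ subblock. Since no dependence on the shape of $\phi_k$ survives in this block (only the values $b_k$ and $T_k^*$, with the same sign structure as in the linear model), I would invoke the characteristic-polynomial computation of \cite{cyclicPathogen} to conclude that every eigenvalue of this block has strictly negative real part. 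Together with the negative eigenvalues peeled off above, this shows $J\fp$ is Hurwitz and hence $\fp$ is locally asymptotically stable.

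The main obstacle is this last step, controlling the roots of the reduced coupled block. This is where the genericity hypothesis that the $T_j^*$ are not all equal does its work, since it is precisely the degenerate equality of the $T_j^*$ that would allow the characteristic polynomial to acquire roots on the imaginary axis and render the linearization inconclusive; excluding this is the technical heart of the argument in \cite{cyclicPathogen}. The one point requiring care in transferring that computation is that the regulated stages now sit at the stage-dependent heights $S_k^*=b_k$ rather than at a common value, so I would verify that the computation of \cite{cyclicPathogen} uses only the signs of the entries and the positivity of the $T_k^*$ at regulated stages, and not a uniform value across stages.
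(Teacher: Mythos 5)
Your proposal is correct and follows essentially the same route as the paper: both reduce the problem to the characteristic-polynomial computation of \cite{cyclicPathogen} by checking that at a moderated fixed point the $\phi$-dependent Jacobian entries either coincide with those of the linear model (via $\phi_k(b_k)=b$ and the normalization $\phi_k'(b_k)=1$ at regulated stages) or retain the required negative sign (via $S_k^*<b_k$ at unregulated stages). Your explicit peeling-off of the eigenvalues $\phi_k(S_k^*)-b<0$ from the unregulated $T$-rows is a slightly more self-contained rendering of the step the paper compresses into ``we can proceed as before,'' and your closing caveat about the stage-dependent values $b_k$ is a fair flag of a point the paper leaves implicit.
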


\begin{corollary}
\label{cor:uniqueStableFixedPoint} For a generic parameter set, the system (
\ref{eq:workingSaturationModel}) (and hence (\ref{eq:ModelWithSaturation}))
has a unique biologically meaningful stable fixed point.
\end{corollary}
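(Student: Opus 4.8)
The plan is to assemble the preceding results, organizing everything around the dichotomy $R_0<1$ versus $R_0>1$ (genericity guarantees $R_0\neq 1$, and we are operating under the standing assumption $\text{SE}(\theta)=\emptyset$). When $R_0<1$, Proposition~\ref{prop:R0} tells us that $(S^*,T^*)=0$ is a global attractor; in particular it is stable, and no other fixed point can be stable, so $0$ is the unique biologically meaningful stable fixed point and that case is finished. The substantive case is $R_0>1$, where Proposition~\ref{prop:R0} makes $0$ unstable, so every stable biologically meaningful fixed point must be infected.

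Next I would pin down which infected fixed points are stable. By Theorem~\ref{thm:unsaturatedUnstable}, an infected fixed point that is not saturated is unstable; conversely, Proposition~\ref{prop:saturatedEqualsModerated} identifies saturated with moderated, and Theorem~\ref{thm:uniqueStableFixedPoint} shows that a moderated infected fixed point is locally asymptotically stable. Hence, for $R_0>1$, the stable biologically meaningful fixed points are exactly the saturated infected ones, and it remains to prove that there is precisely one of these. Uniqueness is the easy half: a saturated fixed point satisfies $\reg{(S^*,T^*)}=\unstr{(\theta)}$ by definition, and this set depends on $\theta$ alone; once the regulated/unregulated partition is fixed, observations 2), 3), 6) and 9) express every $S_j^*$ and $T_j^*$ explicitly in terms of the parameters, so at most one saturated fixed point exists.

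For existence I would construct the candidate by declaring $\reg = \unstr{(\theta)}$ and reading off $S_j^*=b_j$ together with $T_j^*$ from 9) for $j\in\unstr{(\theta)}$, and $T_j^*=0$, $S_j^*=b_{h_j}M_{h_j j}$ from 6) for $j\in\str{(\theta)}$. That these values solve the fixed-point equations is automatic, since 6), 8) and 9) were derived precisely from $\hat F_j=\hat G_j=0$: one checks $\hat G_j=0$ because regulated stages have $\phi_j(S_j^*)=b$ and unregulated stages have $T_j^*=0$, while $\hat F_j=0$ collapses to the recursion $S_j^*=M_{j-1}S_{j-1}^*$ along each unregulated run. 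Since the Remark guarantees $\unstr{(\theta)}\neq\emptyset$, the candidate has a nonempty regulated set, and because $R_0>1$ forces every $M_\ell>0$ all coordinates $S_j^*$ are positive, so the point is infected.

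The main obstacle is showing that the candidate is genuinely saturated, i.e. that $T_j^*>0$ for every $j\in\unstr{(\theta)}$ rather than a degenerate fixed point with some vanishing $T_j^*$. By observation 10) this is equivalent to $b_{h_j}M_{h_j j}>b_j$. Since $h_j\in\reg = \unstr{(\theta)}$ and $j$ is unstarvable, we cannot have $h_j\succ j$, which yields $b_{h_j}M_{h_j j}\ge b_j$, and the genericity exclusion $b_jM_{jk}\neq b_k$ upgrades this to a strict inequality. The one point needing care there is that $b_j$ and $b_{h_j}$ are finite, which I would obtain by noting that under $\text{SE}(\theta)=\emptyset$ every $M_{jk}$ is meaningful, so by Definition~\ref{def:starves} any competent stage starves every incompetent stage; since the host is competent somewhere, $\incomp(\theta)\subseteq\str{(\theta)}$ and hence $\unstr{(\theta)}\cap\incomp(\theta)=\emptyset$, giving the required finiteness. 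With $T_j^*>0$ on $\unstr{(\theta)}$ and the remaining coordinates non-negative, the candidate is a saturated biologically meaningful infected fixed point, which completes existence and, together with the uniqueness and stability steps above, the corollary.
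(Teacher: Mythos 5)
Your argument is correct and follows the same route as the paper's (two-sentence) proof: the saturated $=$ moderated fixed point is unique because $\unstr(\theta)$ depends only on $\theta$, it is stable by Theorem~\ref{thm:uniqueStableFixedPoint}, and every other biologically meaningful fixed point is unstable by Theorem~\ref{thm:unsaturatedUnstable} together with Proposition~\ref{prop:R0}. The only difference is that you spell out what the paper leaves implicit or defers to \cite{cyclicPathogen} --- namely the existence of the saturated biologically meaningful fixed point (via observation 10, the finiteness of $b_j$ on $\unstr(\theta)$, and the genericity exclusion $b_{h_j}M_{h_jj}\neq b_j$) and the $R_0<1$ case --- and these additions are accurate and worth having.
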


\begin{proof}
Since the sets of starvable and unstarvable stages depend only on
$\theta$, there is exactly one saturated fixed point, hence exactly
one moderated fixed point.  The Corollary now follows from the Theorem.
\qed
\end{proof}

\begin{proof}[Theorem \protect\ref{thm:uniqueStableFixedPoint}]
The proof of the corresponding Theorem in \cite{cyclicPathogen} proceeds by
showing that the Jacobian matrix of the system (\ref{eq:workingOriginalModel}
) has eigenvalues all of whose real parts are negative. It will therefore
suffice to show that we can carry out the same computation on the Jacobian
matrix of (\ref{eq:workingSaturationModel}) evaluated at a moderated
fixed point $\fp$.  Since $H$ and $\hat H$ are identical, we need only
consider the partials of $G$ and $\hat G$.  We have 
\begin{align*}
\frac{\partial G_{k}}{\partial S_{j}}& =
\begin{cases}
T_{k} & \text{if $j=k$} \\ 
0 & \text{otherwise}
\end{cases}
\\
\frac{\partial \hat{G}_{k}}{\partial S_{j}}& =
\begin{cases}
\phi _{k}^{\prime }(S_{k})T_{k} & \text{if $j=k$} \\ 
0 & \text{otherwise}
\end{cases}
\end{align*}
Now for $k\in \mathop{\textup {Unreg}}(S^{\ast },T^{\ast })$, both of these
partial derivatives vanish, while if $k\in \mathop{\textup {Reg}}(S^{\ast
},T^{\ast })$, We then have $\S_{k}=b_{k}$ so that $\phi _{k}^{\prime
}(\S_{k})=1$, and once again, the two are identical.

Moreover, we have 
\begin{align*}
\frac{\partial G_{k}}{\partial T_{j}}& =
\begin{cases}
S_{k}-b & \text{if $j=k$} \\ 
0 & \text{otherwise}
\end{cases}
\\
\frac{\partial \hat{G}_{k}}{\partial T_{j}}& =
\begin{cases}
\phi _{k}(S_{k})-b & \text{if $j=k$} \\ 
0 & \text{otherwise}
\end{cases}
\end{align*}
Now if $k\in \mathop{\textup {Reg}}(S^{\ast },T^{\ast })$, we have $\S_{k}=b$
so that $S_{k}-b=0$ in the former case, while in the latter case we have $
\S_{k}=b_{k}$ so that $\phi _{k}(\S_{k})-b=0$. Finally, in the case where $
k\in \mathop{\textup {Unreg}}(S^{\ast },T^{\ast })$, the proof of \cite[
Theorem 2]{cyclicPathogen} appeals to the fact that $(S^{\ast },T^{\ast })$
is moderated, thus ensuring that $\S_{k}-b<0$. Here, the fact that $(S^{\ast
},T^{\ast })$ is moderated implies that $\S_{k}<b_{k}$ so that $\phi
_{k}(\S_{k})-b<0$ and we can proceed as before. \qed
\end{proof}

\subsection{Self-establishing stages}

\label{sec:SE} We now turn to the case where $\text{SE}(\theta)\ne\emptyset$
. In this case we need the assumption that $\theta$ is not fatal, that is, $
\text{SE}(\theta) \cap \text{Incomp}(\theta)=\emptyset$ and $\text{Incomp}
(\theta) \ne [0,n)$.

We start by observing that if $\text{SE}(\theta )\neq \emptyset $, then the
pathogen is viable. Accordingly, in place of Proposition~\ref{prop:R0}, we
have the following.

\begin{proposition}
If $\text{SE}(\theta )\neq \emptyset $, then $(S^{\ast },T^{\ast })=(0,0)$
is an unstable equilibrium. In particular, the pathogen is able to infect
the host.
\end{proposition}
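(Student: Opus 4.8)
The plan is to show that the Jacobian $J(0,0)$ has an eigenvalue with strictly positive real part; linear instability then follows at once. Evaluating the partials listed above at $(S^*,T^*)=(0,0)$, where $S_k=T_k=0$ and $\phi_k(0)=0$, the cross-blocks $\frac{\partial\hat F_k}{\partial T_j}=-S_k$ and $\frac{\partial\hat G_k}{\partial S_j}=\phi_k'(S_k)T_k$ both vanish. Hence $J(0,0)$ is block diagonal, $J(0,0)=\begin{pmatrix}A&0\\0&-bI_n\end{pmatrix}$, with the immune block contributing only the eigenvalue $-b$ and the pathogen block $A$ being the cyclic lower-bidiagonal matrix with $A_{kk}=-(a_k+f_k)$ and $A_{k,k-1}=r_{k-1}f_{k-1}$ (indices mod $n$). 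This is exactly the origin Jacobian of the original model, as already noted in the proof of Proposition~\ref{prop:R0}; since only the identity and the full $n$-cycle contribute to the determinant of the bidiagonal matrix $A-\lambda I$, its characteristic polynomial is
\begin{equation*}
p(\lambda)=\det(\lambda I - A)=\prod_{k=0}^{n-1}(\lambda+a_k+f_k)-\prod_{k=0}^{n-1}r_kf_k .
\end{equation*}

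To exhibit a positive eigenvalue I would set $\lambda^*:=-\min_k(a_k+f_k)$. Because $\se(\theta)\neq\emptyset$, some $a_j+f_j<0$, so $\lambda^*>0$. At $\lambda=\lambda^*$ every factor $\lambda^*+a_k+f_k$ is nonnegative with at least one equal to zero, so the first product vanishes and $p(\lambda^*)=-\prod_k r_kf_k\le 0$, whereas $p(\lambda)\to+\infty$ as $\lambda\to+\infty$. The intermediate value theorem (or $\lambda^*$ itself, when $p(\lambda^*)=0$) then yields a real root $\lambda_0\ge\lambda^*>0$. Since $\lambda_0$ is an eigenvalue of $J(0,0)$ with positive real part, $(0,0)$ is unstable.

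For the assertion that the pathogen infects the host, the unstable eigendirection is $(v,0)$, living in the pathogen subspace because $\lambda_0$ is an eigenvalue of $A$. As $A$ has nonnegative off-diagonal entries it is a Metzler matrix, so by Perron--Frobenius theory its spectral abscissa is real, is at least $\lambda_0>0$, and admits a nonnegative eigenvector $v\ge 0$; then $\epsilon(v,0)$ is biologically meaningful for small $\epsilon>0$ and moves away from the origin, so a small inoculum grows, mirroring the $R_0>1$ interpretation of Proposition~\ref{prop:R0}. I expect this last step, not the spectral computation, to be the main obstacle: instability needs only a positive eigenvalue, but \emph{infection} of the host requires the unstable direction to meet the nonnegative cone, which is precisely what the Metzler structure guarantees. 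The one case to treat with care is $\prod_k r_kf_k=0$, where $p(\lambda^*)=0$ makes $\lambda^*$ the relevant root and $-(a_j+f_j)$ an eigenvalue with an explicit nonnegative eigenvector.
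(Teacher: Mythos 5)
Your proof is correct, but it takes a genuinely different and much heavier route than the paper's. The paper's argument is two lines: pick $j\in\se(\theta)$ and perturb the origin in the $S_j$ coordinate direction; at $\delta e_{S_j}$ one has exactly $\frac{dS_j}{dt}=-(a_j+f_j)\delta>0$ (since $S_{j-1}=0$ and $T_j=0$ there), so every neighborhood of the origin contains biologically meaningful points whose orbits move away, in the same style as Theorem~\ref{thm:unsaturatedUnstable}. You instead compute the full spectrum of $J(0,0)$: the block-diagonal structure, the characteristic polynomial $\prod_k(\lambda+a_k+f_k)-\prod_k r_kf_k$ of the cyclic bidiagonal block, the intermediate-value argument locating a real root $\lambda_0\ge-\min_k(a_k+f_k)>0$, and the Metzler/Perron--Frobenius step producing a nonnegative eigenvector so that the unstable direction meets the biologically meaningful cone. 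All of this is sound, and it buys strictly more: your argument also yields instability whenever $R_0>1$ even with $\se(\theta)=\emptyset$ (which is essentially Proposition~\ref{prop:R0}), and it identifies the growth rate and the unstable eigendirection. What it costs is the spectral machinery, none of which is needed here: the hypothesis $a_j+f_j<0$ hands you an explicit coordinate direction in the nonnegative orthant along which the exact (not merely linearized) vector field points outward, which settles both instability and the ``able to infect'' claim at once. Your closing worry about the case $\prod_k r_kf_k=0$ is handled correctly, but note it evaporates entirely under the direct approach.
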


\begin{proof}
Suppose that $j\in {\mathop{\textup {SE}}}(\theta )$. Then $\frac{\partial \hat{F}_{j}}{\partial S_{j}}|_{(0,0)}=-a_{j}-f_{j}>0$. This gives orbits with positive and
increasing $S_{j}$ inside any open set around $(0,0)$. \qed
\end{proof}

The numbered observations 1) through 9) listed above hold without change.
Observation 10) now requires the additional hypothesis that
$j\notin\se(\theta)$, giving
\medskip

\noindent \textbf{10$'$)} If $j\in \text{Reg}(S^{\ast },T^{\ast })$ and
$j\notin\se(\theta)$, then $ T_{j}^{\ast }>0$ if and only if
$b_{h_{j}}M_{h_{j}j}>b_{j}$. 
\medskip

As before, this follows from observation 9).

\begin{proposition}
\label{prop: SelfEstablishmentImpliesRegulation} Suppose that $\text{SE}
(\theta )\neq \emptyset $ and $(S^{\ast },T^{\ast })$ is a biologically
meaningful infected fixed point. Then $\text{SE}(\theta )\subseteq \text{Reg}
(S^{\ast },T^{\ast })$.
\end{proposition}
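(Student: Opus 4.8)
The plan is to argue directly from the fixed-point equation for the pathogen population at the self-establishing stage. Fix $j\in\se(\theta)$, so that by definition $a_j+f_j<0$. Because $(S^{\ast},T^{\ast})$ is an infected fixed point, observation 1) converts this into the strict positivity of every coordinate: $S_k^{\ast}>0$ for all $k$, and in particular $S_{j}^{\ast}>0$ and $S_{j-1}^{\ast}>0$. This positivity is exactly what makes the rest of the argument go through, so it should be invoked first.

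Next I would write down the fixed-point equation
\[
\hat{F}_j(S^{\ast},T^{\ast})=r_{j-1}f_{j-1}S_{j-1}^{\ast}-S_{j}^{\ast}(a_{j}+f_{j}+T_{j}^{\ast})=0 ,
\]
divide through by $S_{j}^{\ast}>0$, and solve for the response:
\[
T_{j}^{\ast}=\frac{r_{j-1}f_{j-1}S_{j-1}^{\ast}}{S_{j}^{\ast}}-(a_{j}+f_{j}).
\]
The first term is non-negative, since the standing assumptions give $r_{j-1},f_{j-1}\ge 0$ while $S_{j-1}^{\ast},S_{j}^{\ast}>0$; the second term $-(a_{j}+f_{j})$ is strictly positive precisely because $j$ is self-establishing. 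Hence $T_{j}^{\ast}>0$, i.e. $j\in\reg(S^{\ast},T^{\ast})$, and since $j\in\se(\theta)$ was arbitrary this gives $\se(\theta)\subseteq\reg(S^{\ast},T^{\ast})$.

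I do not expect a genuine obstacle here: unlike the earlier propositions, whose content lies in the combinatorics of the starvation order and the regulated/unregulated bookkeeping, this statement collapses to a single equation together with the sign of $a_{j}+f_{j}$. The only point needing attention is justifying the division by $S_{j}^{\ast}$ and the sign of the production term, both of which rest on observation 1) and on the non-negativity of the parameters. If one prefers a contradiction format, assume instead $j\in\unreg(S^{\ast},T^{\ast})$, so $T_{j}^{\ast}=0$; then the same equation forces $r_{j-1}f_{j-1}S_{j-1}^{\ast}=S_{j}^{\ast}(a_{j}+f_{j})<0$, which is impossible because the left-hand side is non-negative. Either phrasing yields the result immediately.
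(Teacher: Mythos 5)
Your proof is correct and is essentially the paper's argument: the paper simply notes that $j\in\se(\theta)$, $S_j^{\ast}>0$ and $T_j^{\ast}=0$ would force $\ddt S_j>0$, which is the contradiction form you give at the end; your direct rearrangement solving for $T_j^{\ast}$ is the same computation. No gap.
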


\begin{proof}
This follows from noting that $j\in {\mathop{\textup {SE}}}(\theta )$, $
S_{j}^{\ast }>0$ and $T_{j}^{\ast }=0$ implies $\dot{S_{j}}>0$. \qed
\end{proof}

\begin{proposition}
\label{prop:seStarves}\footnote{We take the opportunity to amend Proposition 7
  of Section 8 in \cite{cyclicPathogen}. The condition $R_{0}>1$ in the
  statement of that proposition is not required, given that we take SE$(\theta
  )\neq \emptyset $ as a standing assumption for the entire Section 8.}
Suppose $\text{SE}(\theta )\neq \emptyset $. Then $ \starves$ is a strict
partial order.
\end{proposition}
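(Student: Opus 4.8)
The plan is to follow the proof of the earlier proposition (the case $\text{SE}(\theta)=\emptyset$, $R_0>1$) almost line for line, replacing every appeal to $R_0>1$ by an appeal to the hypothesis $\text{SE}(\theta)\neq\emptyset$. The device that makes this substitution work is an elementary observation about cyclic intervals. By Definition~\ref{def:starves}, the relation $j\succ k$ can hold only when $M_{jk}$ is meaningful, i.e. when the cyclic interval $(j,k]$ is disjoint from $\text{SE}(\theta)$. Since for $j\neq k$ the intervals $(j,k]$ and $(k,j]$ are disjoint with $(j,k]\cup(k,j]=[0,n)$, they cannot both miss the nonempty set $\text{SE}(\theta)$; the same remark applies to any pair of cyclic intervals whose union is the full cycle.

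First I would dispatch anti-reflexivity exactly as before: since $M_{jj}=1$ we have $b_jM_{jj}=b_j\not<b_j$, so $j\not\succ j$, a step that never used $R_0$. For asymmetry, suppose $j\succ k$ and $k\succ j$. Then both $(j,k]$ and $(k,j]$ are disjoint from $\text{SE}(\theta)$, and since their union is $[0,n)$ this forces $\text{SE}(\theta)=\emptyset$, contrary to hypothesis. This is precisely the configuration in which the earlier proof wrote $M_{jk}M_{kj}=R_0$ and invoked $R_0>1$; here the meaningfulness conditions alone already give the contradiction, without using the defining inequalities of $\succ$.

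For transitivity, suppose $j\succ k$ and $k\succ\ell$. Anti-reflexivity and asymmetry let me take $j,k,\ell$ distinct, so on the cycle they occur in one of two cyclic orders. If $k\in[j,\ell]$ (order $j,k,\ell$), then $(j,\ell]=(j,k]\cup(k,\ell]$ is again disjoint from $\text{SE}(\theta)$, so $M_{j\ell}$ is meaningful, and by Definition~\ref{def:followOn} $M_{j\ell}=M_{jk}M_{k\ell}$; since $(k,\ell]\cap\text{SE}(\theta)=\emptyset$ the factor $M_{k\ell}$ is positive, so multiplying $b_jM_{jk}<b_k$ by $M_{k\ell}$ and using $b_kM_{k\ell}<b_\ell$ yields $b_jM_{j\ell}<b_\ell$, i.e. $j\succ\ell$. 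If instead $\ell\in[j,k]$ (order $j,\ell,k$), then $(j,k]\cup(k,\ell]=[0,n)$ with both intervals disjoint from $\text{SE}(\theta)$, forcing $\text{SE}(\theta)=\emptyset$, so this case cannot arise. In the earlier argument this second configuration was the one giving $M_{jk}M_{k\ell}=R_0M_{j\ell}$, where $R_0>1$ was needed.

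The only real care needed is the interval bookkeeping: checking that in the surviving transitivity case $(j,k]\cup(k,\ell]$ lies inside a single $\text{SE}(\theta)$-free arc, so that $M_{j\ell}$ is both meaningful and positive, and that in the two excluded configurations the relevant pair of intervals genuinely partitions $[0,n)$. I expect this to be routine once the cyclic order of $j,k,\ell$ is fixed, with the standing genericity assumption (no $\ell$ with $a_\ell+f_\ell=0$) guaranteeing that each follow-on constant along an $\text{SE}(\theta)$-free arc is well defined and strictly positive.
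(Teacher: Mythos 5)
Your proposal is correct and follows essentially the same route as the paper: anti-reflexivity from $M_{jj}=1$, asymmetry from the fact that $(j,k]$ and $(k,j]$ cover the cycle so both cannot miss the nonempty set $\text{SE}(\theta)$, and transitivity by splitting on the cyclic order of $j,k,\ell$, with the wrap-around case excluded by the same covering argument. The extra bookkeeping you flag (positivity of $M_{k\ell}$, meaningfulness of $M_{j\ell}$) is handled implicitly in the paper and does not change the argument.
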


\begin{proof}
We must show that $\succ $ is anti-reflexive, asymmetric and transitive. The
first follows immediately from the fact that $M_{jj}=1$.

To see that $\succ $ is asymmetric, suppose we have $j\succ k$ and $k\succ j$
. This implies $(j,k]\cap $SE$(\theta )=\emptyset $ and $(k,j]\cap $SE$
(\theta )=\emptyset $, contradicting SE$(\theta )\neq \emptyset $.

To see the third we suppose that $j\succ k$ and $k\succ \ell $. We consider
two cases, $k\in \lbrack j,\ell ]$ and $\ell \in \lbrack j,k]$. In the first
case, we have $M_{jk}M_{k\ell }=M_{j\ell }$. We then have $b_{j}M_{jk}<b_{k}$
, $b_{k}M_{k\ell }<b_{\ell }$ giving $b_{j}M_{j\ell }=b_{j}M_{jk}M_{k\ell
}<b_{k}M_{k\ell }<b_{\ell }$ as required. The second case would imply $
(j,k]\cup (k,\ell ]=[0,n)$ as well as $(j,k]\cap $SE$(\theta )=\emptyset $
and $(k,\ell ]\cap $SE$(\theta )=\emptyset $, contradicting SE$(\theta )\neq
\emptyset $. \qed
\end{proof}

We define $\str{(\theta )}$, $\unstr{(\theta )}$, saturated and moderated as
before.  Proposition \ref{prop:starves} holds in the case
$\se(\theta)\ne\emptyset$.  However, there is a small change in the proof.  In
the case where $\se(\theta)=\emptyset$, we appeal to observation 10).  In the
case where $\se(\theta)\ne\emptyset$, we need to note that $j\starves k$
implies that $k\notin \se(\theta)$ and we are thus able to appeal to
observation 10$'$.  The proof then proceeds as before.

The equivalence of moderation and saturation (Proposition \ref
{prop:saturatedEqualsModerated}) and their necessity for stability (Theorem
\ref{thm:unsaturatedUnstable}) can be proved as before, the result of
Proposition \ref{prop: SelfEstablishmentImpliesRegulation} playing an
important role.

In order to prove sufficiency in the presence of self-establishing stages
(Theorem \ref{thm:uniqueStableFixedPoint} and its consequences), we rely on
Lemma 1 of Section 8 in \cite{cyclicPathogen} and the argument provided there
after the Proof of the Lemma.

\section{Discussion}

\label{sec:discussion}

In this paper, we have generalized the T-cell activation and
proliferation model of \cite{cyclicPathogen} in order to make that
model applicable to more realistic antigen dose - T-cell proliferation
response curves. While in most regimes, we would expect T-cell
proliferation to rise in response to increased antigen, this rate is
not driven by the encounter of T-cells with infected target cells, but
rather by the presentation of antigen to T-cells by dendritic cells
(\cite{janeway}). Thus, a model of the mechanisms underlying T-cell
proliferation must include additional cell populations and quite
complicated cellular processes carried out by those cells
(\cite{Zhang}).

Further, there is a widespread phenomenon that cannot be explained as
the fixed point of a system like (\ref{eq:workingSaturationModel}),
namely the existence of long-lived T-cell responses to multiple
epitopes, either of a single pathogen or in our case to a single
pathogen stage. Multiple T-cell responses are often modeled as
competing for antigen in a predator-prey dynamic. The antigenicity of
the T-cell's epitope functions as the T-cell's fitness and this leads
to a winner-take-all dynamic where the response to the most antigenic
epitope survives and the others become extinct (\cite {NowakMay},
\cite{killerCellDynamics}). It seems likely that memory T-cells play an
important role in the survival of multiple responses. However, the
interactions between effector and memory populations are still not
well understood (\cite{generationOfEffector}, \cite{roleOfModels}).
These populations exhibit differing longevity.  Stated in terms of our
model, they do not share a common value for $b$.  In addition, high
levels of antigen can lead to CTL exhaustion (\cite{Mueller}), a
phenomenon that argues against a monotone increasing $\phi$.

The upshot of this is that if we wish to refine the cyclic pathogen model to
present an increasingly detailed picture of the cell populations and their
mechanisms, we will need to include multiple immune cell populations at each
stage. The dynamics of such a system could be quite complicated. However,
there is a variable that summarizes the collective immune pressure against a
given stage, namely their net kill rate of the effector populations. Thus,
if $T_{j1},\dots,T_{jk}$ are the effector populations against stage $S_j$,
we can write $\tau_j =\tau_j(T_{j1},\dots,T_{jk})$ so that we now have 
\begin{equation*}
\frac{d S_j}{dt} = r_{j-1} f_{j-1} S_{j-1} - (a_j + f_j + \tau_j) S_j. 
\end{equation*}
We cannot expect that the dynamics of such an expanded system can be
mapped to the dynamics of (\ref{eq:workingSaturationModel}) because we
cannot necessarily expect $T_{j1},\dots,T_{jk}$ (and any non-effector
populations) to vary in a way which makes $\frac{d \tau_j}{dt}$ a
function of $S_j$ and $ \tau_j$. However, once the dynamics of these
populations are understood, in the neighborhood of a fixed point,
understanding the marginal response of $\tau_j $ to $S_j$ may allow us
to use (\ref{eq:workingSaturationModel}) to summarize these dynamics
in a way which will allow us to establish the existence of a stable
fixed point.

\section*{Acknowledgements}

We wish to thank Dr.\ Jared Hawkins and Prof.\ David Thorley-Lawson
for many conversations about the underlying biology which inspired
this work. Dr.\ Hawkins was especially helpful in bringing many apropos
references to our attention.

\bibliographystyle{plain}

%%\bibliography{cpSaturation}

\begin{thebibliography}{10}

\bibitem{roleOfModels}
R.~Antia, V.~V. Ganusov, and R.~Ahmed.
\newblock {The role of models in understanding CD8+ T-cell memory}.
\newblock {\em Nat Rev Immunol}, 5:1474--1733, 2005.

\bibitem{ReviewOnModelingBiochemicalReactions}
W.~W. Chen, M.~Niepel, and P.~K. Sorger.
\newblock Classic and contemporary approaches to modeling biochemical
  reactions.
\newblock {\em Genes $\&$ Development}, 24(17):1861--1875, 2010.

\bibitem{EnzymeCatalyzedReactions}
W.~W. Cleland.
\newblock What limits the rate of an enzyme-catalyzed reaction.
\newblock {\em Accounts of Chemical Research}, 8(5):145--151, 1975.

\bibitem{generationOfEffector}
W.~Cui and S.~M. Kaech.
\newblock {Generation of effector CD8+ T cells and their conversion to memory T
  cells}.
\newblock {\em Immunological Reviews}, 236(1):151--166, 2010.

\bibitem{cyclicPathogen}
E.~Delgado-Eckert and M.~Shapiro.
\newblock A model of host response to a multi-stage pathogen.
\newblock {\em J. Math. Biol.}, 63(2):201--227, 2011.

\bibitem{MalariaReview}
B.~M. Greenwood, K.~Bojang, C.~J. Whitty, and G.~A. Targett.
\newblock Malaria.
\newblock {\em The Lancet}, 365(9469):1487--1498, 2005.

\bibitem{Henrickson}
S.~Henrickson, T.~Mempel, I.~Mazo, B.~Liu, M.~Artyomov, H.~Zheng, A.~Peixoto,
  M.~Flynn, B.~Senman, T.~Junt, H.~Wong, A.~Chakraborty, and U.~von Andrian.
\newblock T cell sensing of antigen dose governs interactive behavior with
  dendritic cells and sets a threshold for t cell activation.
\newblock {\em Nat Immunol}, 9(3):282--91, Mar 2008.

\bibitem{[17]}
D.~Hochberg, T.~Souza, M.~Catalina, J.~L. Sullivan, K.~Luzuriaga, and D.~A.
  Thorley-Lawson.
\newblock Acute infection with epstein-barr virus targets and overwhelms the
  peripheral memory b-cell compartment with resting, latently infected cells.
\newblock {\em J Virol}, 78(10):5194--204, 2004.

\bibitem{Hudriser}
D.~Hudrisier, J.~Riond, L.~Garidou, C.~Duthoit, and E.~Joly.
\newblock T cell activation correlates with an increased proportion of antigen
  among the materials acquired from target cells.
\newblock {\em Eur J Immunol}, 35(8):2284--94, Aug 2005.

\bibitem{[18]}
G.~Khan, E.~M. Miyashita, B.~Yang, G.~J. Babcock, and D.~A. Thorley-Lawson.
\newblock Is ebv persistence in vivo a model for b cell homeostasis?
\newblock {\em Immunity}, 5(2):173--9, 1996.

\bibitem{Mueller}
S.~Mueller and A.~R.
\newblock High antigen levels are the cause of t cell exhaustion during chronic
  viral infection.
\newblock {\em Proc Natl Acad Sci U S A}, 106(21):8623--8, May 2009.

\bibitem{janeway}
K.~M. Murphy, P.~Travers, and M.~Walport.
\newblock {\em {Janeway's Immunobiology (Immunobiology: The Immune System
  (Janeway))}}.
\newblock Garland Science, 7 edition, Nov. 2007.

\bibitem{NowakMay}
M.~A. Nowak and R.~M. May.
\newblock {\em Virus dynamics}.
\newblock Oxford University Press, Oxford, 2000.
\newblock Mathematical principles of immunology and virology.

\bibitem{EBVreview}
D.~Thorley-Lawson.
\newblock Epstein-barr virus: exploiting the immune system.
\newblock {\em Nature Reviews Immunology}, 1(1):75--82, 2001.

\bibitem{ThorleyLawson2008195}
D.~A. Thorley-Lawson, K.~A. Duca, and M.~Shapiro.
\newblock Epstein-barr virus: a paradigm for persistent infection - for real
  and in virtual reality.
\newblock {\em Trends in Immunology}, 29(4):195 -- 201, 2008.

\bibitem{ChagasReview}
K.~M. Tyler and D.~M. Engman.
\newblock The life cycle of trypanosoma cruzi revisited.
\newblock {\em International Journal for Parasitology}, 31(5-6):472 -- 480,
  2001.

\bibitem{killerCellDynamics}
D.~Wodarz.
\newblock {\em Killer Cell Dynamics: Mathematical and Computational Approaches
  to Immunology}.
\newblock Springer Verlag, 2007.

\bibitem{Zhang}
N.~Zhang and M.~Bevan.
\newblock Cd8(+) t cells: foot soldiers of the immune system.
\newblock {\em Immunity}, 35(2):161--8, Aug 2011.

\bibitem{Zheng}
H.~Zheng, B.~Jin, S.~Henrickson, A.~Perelson, U.~von Andrian, and
  A.~Chakraborty.
\newblock How antigen quantity and quality determine t-cell decisions in
  lymphoid tissue.
\newblock {\em Mol Cell Biol}, 28(12):4040--51, Jun 2008.

\end{thebibliography}

\end{document}